\algrenewcommand\algorithmicthen{\relax}
\algrenewcommand\algorithmicdo{\relax}
\renewcommand{\d}{\,\mathrm{d}}
\newcommand{\dd}{\mathrm{d}}
\newcommand{\E}{\mathbb{E}}
\DeclareMathOperator{\inv}{in}
\DeclareMathOperator{\outv}{out}
\newcommand{\source}{\textnormal{source}}
\newcommand{\sink}{\textnormal{sink}}
\DeclareMathOperator{\ep}{ep}
\DeclareMathOperator{\mep}{mep}
\DeclareMathOperator{\KL}{\mathrm{KL}}
\newcommand{\LB}{\mathrm{LB}}
\newcommand{\ELB}{\mathrm{ELB}}
\newcommand{\UB}{\mathrm{UB}}
\newcommand{\UMM}{\mathrm{UMM}}
\newcommand{\Beta}{\mathrm{B}}
\newcommand{\peq}{\mathrel{+}=}
\newcommand{\Sum}{\textrm{Sum}}
\theoremstyle{plain}
\newtheorem{theorem}{Theorem}[section]
\newtheorem{corollary}[theorem]{Corollary}
\newtheorem{lemma}[theorem]{Lemma}
\newtheorem{proposition}[theorem]{Proposition}
\theoremstyle{definition}
\newtheorem{example}[theorem]{Example}
\theoremstyle{remark}
\newtheorem{remark}[theorem]{Remark}
\newcommand{\abstr}
  {The topic of this paper is testing exchangeability
  using e-values in the batch mode,
  with the Markov model as alternative.
  The null hypothesis of exchangeability
  is formalized as a Kolmogorov-type compression model,
  and the Bayes mixture of the Markov model
  w.r.\ to the uniform prior is taken as simple alternative hypothesis.
  Using e-values instead of p-values leads
  to a computationally efficient testing procedure.
  In the appendixes I explain connections with the algorithmic theory of randomness
  and with the traditional theory of testing statistical hypotheses.
  In the standard statistical terminology,
  this paper proposes a new permutation test.
  This test can also be interpreted as a poor man's version
  of Kolmogorov's deficiency of randomness.}
\begin{document}
\title{Testing exchangeability in the batch mode with e-values and Markov alternatives}
\author{Vladimir Vovk}

\maketitle
\begin{abstract}
  \smallskip
  \abstr

  The version of this paper at \url{http://alrw.net} (Working Paper 38)
  is updated most often.
\end{abstract}

\section{Introduction}

The usual approach to testing exchangeability in statistics
is based on using p-values, as in \cite[Sect.~7.2]{Lehmann:2006}.
In this paper we will use e-values instead
\cite{Vovk/Wang:2021,Grunwald/etal:arXiv1906},
which facilitates computations.
E-values have been used for testing exchangeability via conformal prediction
\cite[Part~III]{Vovk/etal:2022book} in the online protocol,
while in this paper we will use the standard batch protocol:
we are given the data sequence as one batch
rather than getting its elements sequentially one by one.

The null hypothesis of exchangeability will be defined
in Sect.~\ref{sec:testing} using the terminology of compression modelling
\cite[Chap.~11]{Vovk/etal:2022book}.
Compression modelling is an algorithm-free version of Kolmogorov's way of stochastic modelling:
cf.\ \cite{Vovk:2001Denmark}, \cite{Vovk/Shafer:2003}, \cite[Sect.~2]{Vyugin:arXiv1907},
and \cite[Sect.~11.6.1]{Vovk/etal:2022book}.
Kolmogorov's original version will be discussed in Appendix~\ref{app:ATR}.

In Sect.~\ref{sec:testing} we also define e-variables,
our tools for obtaining e-values in testing exchangeability
(or another null hypothesis).
We will derive our main e-variable as likelihood ratio
for a Markovian alternative hypothesis,
which we will introduce in Sect.~\ref{sec:algorithm}.
A simple optimality property of the likelihood ratios
is derived in Sect.~\ref{sec:performance}.

After defining our main alternative hypothesis in Sect.~\ref{sec:algorithm},
we derive an efficient algorithm for computing the corresponding e-variable.
The power of this e-variable is the topic of Sect.~\ref{sec:e-power}.
The algorithm's performance in view of the results of Sect.~\ref{sec:e-power}
is studied in Sect.~\ref{sec:experiments} using simulated data.
Section~\ref{sec:conclusion} concludes.

In Appendix~\ref{app:ATR} I describe Kolmogorov's original ideal picture
of algorithmic randomness.
In the following Appendix~\ref{app:universal} we will discuss possible ways
of making this picture more practical,
and in Appendix~\ref{app:changepoint} will go deeper
into another class of alternatives for testing exchangeability
(namely, into the changepoint alternatives).

In traditional statistics, the p-value version of the procedure of this paper
is often presented in terms of the Neyman structure;
see, e.g., \cite[Sect.~4.3]{Lehmann/Romano:2022}.
We discuss its counterpart for e-values in Appendix~\ref{app:Neyman}.

\section{Testing exchangeability}
\label{sec:testing}

We consider the simplest binary case,
and our \emph{observation space} is $\mathbf{Z}:=\{0,1\}$.
Fix an integer $N>1$, which we will refer to as the \emph{horizon}.
We are interested in binary \emph{data sequences}
$(z_1,\dots,z_N)\in\Omega:=\mathbf{Z}^N$.
A \emph{Kolmogorov compression model} (KCM)
is a \emph{summarising statistic} $t:\Omega\to\Sigma$,
where $\Sigma$ is a finite set (the \emph{summary space}),
together with the implicit statement
that given the \emph{summary} $t(z_1,\dots,z_N)$
(for which we do not make any stochastic assumptions)
the actual data sequence $(z_1,\dots,z_N)$ is generated
from the uniform probability measure.
Our \emph{null hypothesis} is the KCM,
which we call the \emph{Kolmogorov exchangeability model},
$t_E(z_1,\dots,z_N):=z_1+\dots+z_N$.

Let us say that a probability measure $P$ \emph{agrees}
with a summarising statistic $t$
if the data sequences with the same summary have the same $P$-probability.

\begin{lemma}\label{lem:mixtures}
  The exchangeable probability measures on $\Omega$
  are exactly the probability measures that agree
  with the Kolmogorov exchangeability model
  (the mixtures of the uniform probability measures on $t_E^{-1}(k)$).
\end{lemma}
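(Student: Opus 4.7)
The key combinatorial observation is binary-specific: for $\mathbf{Z}=\{0,1\}$, the orbits of the symmetric group $S_N$ acting on $\Omega=\{0,1\}^N$ by permutation of coordinates are exactly the level sets $t_E^{-1}(k)$ for $k\in\{0,1,\dots,N\}$. Indeed, a binary sequence is determined up to permutation by the positions of its $1$s, and two such sequences have a permutation mapping one to the other if and only if they contain the same number of $1$s. So I would begin the proof by recording this identification of orbits with level sets.

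Given this, both inclusions of the first claim are one-liners. If $P$ agrees with $t_E$, then for any permutation $\pi$ the sequences $(z_1,\dots,z_N)$ and $(z_{\pi(1)},\dots,z_{\pi(N)})$ have the same sum, hence by agreement the same $P$-probability; so $P$ is exchangeable. Conversely, if $P$ is exchangeable and two sequences share the same value of $t_E$, then by the orbit identification one is a permutation of the other, so they have the same $P$-probability; hence $P$ agrees with $t_E$.

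For the parenthetical characterisation as mixtures of uniforms on the level sets, I would decompose any $P$ agreeing with $t_E$ by conditioning:
\[
  P = \sum_{k=0}^{N} P(t_E = k)\, P(\,\cdot\mid t_E=k).
\]
The agreement property says $P$ assigns equal mass to all sequences in $t_E^{-1}(k)$, so the conditional $P(\,\cdot\mid t_E=k)$ is the uniform distribution on $t_E^{-1}(k)$ whenever $P(t_E=k)>0$, exhibiting $P$ as a mixture of these uniforms. Conversely, any such mixture agrees with $t_E$ by construction.

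No step here is really an obstacle; the whole content of the lemma is the binary orbit identification in the first paragraph. If I were worried about anything, it would be being explicit that this identification fails for larger alphabets (so that readers do not mistakenly expect a sum-based KCM to capture exchangeability in general), but within the setting of the paper the argument is essentially immediate.
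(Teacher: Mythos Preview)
Your proposal is correct. The paper omits the proof of this lemma entirely (``The easy proof of Lemma~\ref{lem:mixtures} is omitted''), so there is nothing to compare against; your argument via the orbit identification for the binary alphabet is exactly the natural route and matches what the paper evidently has in mind.
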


The easy proof of Lemma~\ref{lem:mixtures} is omitted.
It shows that, in terms of standard statistical modelling,
we can define our null hypothesis as the set of all exchangeable probability measures
on $\Omega$.

An \emph{e-variable} w.r.\ to a probability measure
is a nonnegative function on $\Omega$ with expectation at most 1.
An \emph{exchangeability e-variable} is a function $E:\Omega\to[0,\infty)$
whose average over each $t_E^{-1}(k)$ is at most 1.
Alternatively, it is an exchangeability e-variable
w.r.\ to any exchangeable probability measure.

\begin{proposition}
  The two meanings of an exchangeability e-variable coincide.
\end{proposition}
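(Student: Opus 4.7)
The plan is to use Lemma~\ref{lem:mixtures} as the bridge between the two formulations. Let me write condition (1) as: the average of $E$ over each nonempty fiber $t_E^{-1}(k)$ is at most $1$; and condition (2) as: $\int E \d P \le 1$ for every exchangeable probability measure $P$ on $\Omega$. I would prove the equivalence by establishing the two implications separately, with Lemma~\ref{lem:mixtures} doing the heavy lifting in one direction and a specialization argument handling the other.

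For $(1)\Rightarrow(2)$, I would start from an arbitrary exchangeable $P$ and invoke Lemma~\ref{lem:mixtures} to write $P$ as a mixture $P=\sum_k \lambda_k U_k$, where $U_k$ is the uniform probability measure on $t_E^{-1}(k)$ (the sum being over those $k\in\{0,\dots,N\}$ for which the fiber is nonempty, and $\lambda_k\ge 0$ summing to $1$). Then $\int E \d P = \sum_k \lambda_k \int E \d U_k$, and each $\int E \d U_k$ is exactly the fiber average, which is at most $1$ by hypothesis; so $\int E \d P\le\sum_k\lambda_k=1$.

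For $(2)\Rightarrow(1)$, I would just specialize: for each $k$ with $t_E^{-1}(k)\ne\emptyset$, the measure $U_k$ itself is a (degenerate) mixture and is therefore exchangeable by Lemma~\ref{lem:mixtures}. Applying (2) to $P=U_k$ gives $\int E \d U_k\le 1$, which is the fiber-average condition.

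There is no real obstacle here; the only thing to keep straight is that $U_k$ is a legitimate exchangeable probability measure in its own right (so that the specialization step is valid), and this is exactly what Lemma~\ref{lem:mixtures} guarantees. The statement is essentially a repackaging of the lemma together with linearity of expectation, so the write-up can be kept to a few lines.
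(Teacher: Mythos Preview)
Your proposal is correct and follows essentially the same approach as the paper's proof: both directions hinge on Lemma~\ref{lem:mixtures}, with the forward implication using the mixture representation of an exchangeable measure and the reverse implication specializing to the uniform measure $U_k$ on a single fiber. Your write-up is somewhat more explicit about the mixture decomposition and the linearity step, but the underlying argument is identical.
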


\begin{proof}
  If the average of $E$ over each $t_E^{-1}(k)$ is at most 1,
  it will be an e-variable w.r.\ to each exchangeable probability measure
  by Lemma~\ref{lem:mixtures}.

  Now suppose $E$ is an e-variable w.r.\ to each exchangeable probability measure.
  Since the uniform probability measure on $t_E^{-1}(k)$
  is exchangeable,
  the average of $E$ over $t_E^{-1}(k)$ will be at most 1.
\end{proof}

All null hypotheses considered in this paper will be Kolmogorov compression models.
In the main part of the paper we will concentrate on the exchangeability model,
but in this and next section we will also give more general definitions.
An \emph{e-variable} w.r.\ to a KCM $t$ is a function $E:\Omega\to[0,\infty)$
such that the arithmetic mean of $E$ over $t^{-1}(\sigma)$ is at most 1
for any $\sigma\in t(\Omega)$.
\emph{E-values} are values taken by e-variables.

\subsection*{Disintegration of the alternative hypothesis}

Let us fix an \emph{alternative hypothesis} $Q$,
which is a probability measure on $\Omega$.
Our statistical procedures will depend on $Q$
only via the corresponding {batch compression model} (BCM).
A BCM is a pair $(t,P)$ such that $t:\Omega\to\Sigma$ is a summarising statistic
and $P:\Sigma\hookrightarrow\Omega$
(to use the notation of \cite[Sect.~A.4]{Vovk/etal:2022book})
is a Markov kernel
such that $P(\sigma)$ is concentrated on $t^{-1}(\sigma)$
for each $\sigma\in\Sigma$.
As before, we refer to $t(\omega)$ as the \emph{summary} of $\omega$.
Kolmogorov compression models are a special case
in which $P(\sigma)$ are the uniform probability measures on $t^{-1}(\sigma)$.

\begin{remark}
  Batch compression models are in fact standard
  and are often used without giving them any name, as in \cite{Lauritzen:1988}.
\end{remark}

With an alternative hypothesis $Q$ and a null hypothesis $t$
we associate the \emph{alternative Markov kernel}
\[
  Q_{\sigma}(\{\omega\})
  :=
  \frac{Q(\{\omega\})}{Q(t^{-1}(\sigma))},
  \quad
  \sigma\in\Sigma,
  \enspace
  \omega\in t^{-1}(\sigma).
\]
As compared with $Q$, the alternative Markov kernel
loses the information about $Q(\{\sigma\})$ for $\sigma\in\Sigma$.

\section{Frequentist performance of e\-/variables}
\label{sec:performance}

Suppose $Q$ (the alternative probability measure)
is the true data-generating distribution
and we keep generating data sequences $(z_1,\dots,z_N)\in\Omega$
from $Q$ in the IID fashion.
The following lemma allows us to define the efficiency of an e-variable
via its frequentist performance when we keep applying it repeatedly
to accumulate capital.
This is a special case of Kelly's criterion \cite{Kelly:1956}.

\begin{lemma}\label{lem:frequentist}
  Consider an e-variable $E$
  w.r.\ to a Kolmogorov compression model $t:\Omega\to\Sigma$.
  For any alternative probability measure $Q$ on $\Omega$,
  the limit\footnote{%
    In this paper, our notation for logarithms is $\ln$ (natural)
    and $\log$ (binary, in Appendix~\ref{app:ATR}).}
  \begin{equation}\label{eq:ep}
    \ep_Q(E)
    :=
    \lim_{I\to\infty}
    \frac1I
    \ln
    \prod_{i=1}^I
    E(z_1^i,\dots,z_N^i)
  \end{equation}
  where $(z_1^i,\dots,z_N^i)$ is the $i$th data sequence generated from $Q$ independently,
  exists $Q^{\infty}$-almost surely.
  Moreover, for all $E$ and $Q$,
  \begin{equation}\label{eq:e-power}
    \ep_Q(E)
    =
    \int \ln E \d Q.
  \end{equation}
\end{lemma}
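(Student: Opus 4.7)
The plan is to reduce the claim to the (extended) strong law of large numbers. Taking logarithms in (\ref{eq:ep}) turns the product into a sum:
\[
  \frac1I \ln \prod_{i=1}^I E(z_1^i,\dots,z_N^i)
  = \frac1I \sum_{i=1}^I \ln E(z_1^i,\dots,z_N^i),
\]
and under $Q^{\infty}$ the summands $X_i := \ln E(z_1^i,\dots,z_N^i)$ form an i.i.d.\ sequence (taking values in $[-\infty,\infty)$) whose common distribution is the push-forward of $Q$ under $\ln E$. So it suffices to verify that the hypotheses of Kolmogorov's SLLN (in its version that admits the value $-\infty$ for the mean) are satisfied with limit $\int\ln E\d Q$.

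The key observation is that $\Omega=\{0,1\}^N$ is finite, so the e-variable $E:\Omega\to[0,\infty)$ is bounded. Combined with the elementary inequality $\ln x\le x$ for $x>0$, this gives $(\ln E)^+\le\max(E,0)$, so $\E_Q[(\ln E)^+]<\infty$. On the other hand $(\ln E)^-$ may be unbounded (or even a.s.\ infinite) when $E$ vanishes on a set of positive $Q$-measure; however nothing breaks, because $\int\ln E\d Q$ is still well defined as an element of $[-\infty,\infty)$. Thus the $X_i$ have a well-defined (possibly $-\infty$) expectation equal to $\int\ln E\d Q$.

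Now I would invoke the SLLN in the following form: if $X_1,X_2,\dots$ are i.i.d.\ with $\E X_1^+<\infty$, then $\frac1I\sum_{i=1}^I X_i\to\E X_1$ almost surely, where the limit is interpreted in $[-\infty,\infty)$ and equals $-\infty$ when $\E X_1^-=\infty$. Applying this to $X_i=\ln E(z_1^i,\dots,z_N^i)$ gives both the existence of the limit $\ep_Q(E)$ in (\ref{eq:ep}) $Q^\infty$-almost surely and the identity (\ref{eq:e-power}).

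The only mildly delicate point is the possible $-\infty$ values, which is why one needs the extended SLLN rather than the standard $L^1$ version; no other obstacle arises, since the compactness of $\Omega$ makes $(\ln E)^+$ automatically integrable and the argument does not use the e-variable property of $E$ (that property is not needed for this lemma, only finiteness of $E$ on the finite sample space).
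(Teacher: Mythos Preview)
Your proposal is correct and follows the same route as the paper: rewrite the product as a sum and apply Kolmogorov's strong law of large numbers to the i.i.d.\ summands $\ln E(z_1^i,\dots,z_N^i)$. You are in fact slightly more careful than the paper, which simply asserts that $\int\ln E\d Q$ ``exists and is finite since the sample space is assumed to be finite''; you correctly allow for $E$ to vanish on a set of positive $Q$-measure and invoke the extended SLLN so that the conclusion still holds with the value $-\infty$.
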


The interpretation of \eqref{eq:ep} is that
our capital
$
  \prod_{i=1}^I
  E(z_1^i,\dots,z_N^i)
$
grows exponentially fast
(we will see later, in Lemma~\ref{lem:optimal},
that we can indeed expect it to grow rather than shrink
if we can guess a good $Q$),
and its rate of growth is given by the expression \eqref{eq:e-power},
which we will refer to as the \emph{e-power} of $E$ under the alternative $Q$.

\begin{proof}
  It suffices to rewrite \eqref{eq:ep} as
  \[
    \ep_Q(E)
    =
    \lim_{I\to\infty}
    \frac1I
    \sum_{i=1}^I
    \ln E(z_1^i,\dots,z_N^i)
  \]
  and apply Kolmogorov's law of large numbers
  to the IID random variables $\ln E(z_1^i,\dots,z_N^i)$
  with expectation $\int \ln E \d Q$
  (which exists and is finite since the sample space is assumed to be finite).
\end{proof}

To justify the expression \eqref{eq:e-power}
using frequentist considerations, we do not really need the IID picture,
as emphasized by Neyman \cite[Sect.~10]{Neyman:1977}.
When generating $z_1^i,\dots,z_N^i$ for different $i$,
we may test different Kolmogorov compression models $t=t_i$,
perhaps with different horizons $N=N_i$,
against different alternatives $Q=Q_i$.
The corresponding generalization of Lemma~\ref{lem:frequentist}
states that the long-term rate of growth of our capital
will be asymptotically close
to the arithmetic average of $\int \ln E_i \d Q_i$.
It will involve certain regularity conditions
needed for the applicability of the martingale strong law of large numbers
(e.g., in the form of \cite[Chap.~4]{Shafer/Vovk:2019},
which allows non-stochastic choice of $N_i$, $t_i$, and $Q_i$)
If the alternative hypothesis does not hold in all trials,
Lemma~\ref{lem:frequentist} is still applicable to the trials
where it does hold.

Now it is easy to find the optimal, in the sense of $\ep_Q$, e-variable;
it will be the ratio of the alternative Markov kernel to the null hypothesis.

\begin{lemma}\label{lem:optimal}
  The maximum of $\ep_Q$ is attained at
  \begin{equation}\label{eq:E}
    E(\omega)
    :=
    \left|
      t^{-1}(t(\omega))
    \right|
    Q_{t(\omega)}(\{\omega\}),
    \quad
    \omega\in\Omega.
  \end{equation}
  In this case,
  \begin{equation}\label{eq:super-expression}
    \mep(Q)
    :=
    \ep_Q(E)
    =
    \int
    \ln
    \left|
      t^{-1}(\sigma)
    \right|
    (t_*Q)(\dd\sigma)
    +
    H(t_*Q)
    -
    H(Q),
  \end{equation}
  where $t_*Q$ stands for the push-forward measure of $Q$ by $t$
  (the summarising statistic of the null hypothesis),
  and $H$ stands for the entropy.
\end{lemma}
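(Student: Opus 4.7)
The plan is to reduce the optimization to a collection of independent fiberwise problems and then use Gibbs' inequality on each fiber.

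First I would check that the proposed $E$ is indeed an e-variable: on each fiber $t^{-1}(\sigma)$ we have $\sum_{\omega\in t^{-1}(\sigma)} E(\omega) = |t^{-1}(\sigma)| \sum_{\omega\in t^{-1}(\sigma)} Q_\sigma(\{\omega\}) = |t^{-1}(\sigma)|$, so its average over $t^{-1}(\sigma)$ equals $1$. By Lemma~\ref{lem:frequentist} we can work with $\ep_Q(E) = \int \ln E \d Q$.

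Next, I would disintegrate $Q$ along $t$, writing $Q(\{\omega\}) = (t_*Q)(\{\sigma\}) \, Q_\sigma(\{\omega\})$ for $\omega \in t^{-1}(\sigma)$, so that
\[
  \int \ln E \d Q
  =
  \sum_{\sigma}
  (t_*Q)(\{\sigma\})
  \sum_{\omega\in t^{-1}(\sigma)}
  Q_\sigma(\{\omega\}) \ln E(\omega).
\]
For each $\sigma$ separately, I want to maximize $\sum_\omega Q_\sigma(\{\omega\}) \ln E(\omega)$ subject to the e-variable constraint $\sum_{\omega\in t^{-1}(\sigma)} E(\omega) \le |t^{-1}(\sigma)|$. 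Writing $E(\omega)/|t^{-1}(\sigma)|$ as a subprobability mass function on $t^{-1}(\sigma)$, Gibbs' inequality (equivalently, the nonnegativity of KL divergence) gives that the inner sum is maximized by taking $E(\omega) = |t^{-1}(\sigma)| \, Q_\sigma(\{\omega\})$, which is exactly \eqref{eq:E}; monotonicity of $\ln$ handles the case when the constraint is strict.

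Finally, I would compute the value of $\ep_Q(E)$ at this optimum. Substituting $\ln E(\omega) = \ln|t^{-1}(\sigma)| + \ln Q_\sigma(\{\omega\})$ splits the integral into two parts. The first part gives $\int \ln |t^{-1}(\sigma)| \, (t_*Q)(\dd\sigma)$ directly. For the second, the standard entropy chain rule
\[
  H(Q)
  =
  H(t_*Q)
  +
  \sum_\sigma (t_*Q)(\{\sigma\}) H(Q_\sigma)
\]
(which itself follows from disintegrating $-\sum_\omega Q(\{\omega\})\ln Q(\{\omega\})$) yields $\sum_\sigma (t_*Q)(\{\sigma\}) \sum_\omega Q_\sigma(\{\omega\}) \ln Q_\sigma(\{\omega\}) = H(t_*Q) - H(Q)$, completing \eqref{eq:super-expression}. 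The only mildly subtle point is handling the case $Q_\sigma(\{\omega\}) = 0$, where $Q$ assigns zero mass to $\omega$ and the convention $0\ln 0 = 0$ makes the contribution vanish; no step is a real obstacle, as the argument is essentially Gibbs' inequality plus the entropy chain rule.
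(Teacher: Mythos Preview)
Your proposal is correct and follows essentially the same route as the paper: fiberwise decomposition of the optimization, Gibbs' inequality (equivalently, nonnegativity of the Kullback--Leibler divergence) on each fiber to identify the optimizer, and the entropy chain rule $H(Q)=H(t_*Q)+\int H(Q_\sigma)\,(t_*Q)(\dd\sigma)$ to obtain \eqref{eq:super-expression}. Your write-up is actually a bit more explicit than the paper's (you check that $E$ is an e-variable and note the $0\ln 0$ convention), but there is no substantive difference in strategy.
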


We will call $\mep(Q)$ defined by \eqref{eq:super-expression}
the \emph{maximum e-power} of the alternative $Q$.
A sizeable $\mep(Q)$ for a plausible alternative $Q$ means that the testing problem is not hopeless
and has some potential.
The guarantee given by Lemma~\ref{lem:frequentist}, however,
is frequentist and not applicable
if testing is done only once,
in which case we also want the optimal e-variable \eqref{eq:E}
not to be too volatile.

\begin{proof}
  In this paper we let $U_A$
  stand for the uniform probability measure on a finite set $A$.
  The optimization $\int E \d Q\to\max$
  can be performed inside each block $t^{-1}(\sigma)$ separately.
  Using the nonnegativity of the Kullback--Leibler divergence,
  we have, for each $\sigma\in t(\Omega)$,
  \[
    \ep_{Q_{\sigma}}
    \left(
      \frac{Q_{\sigma}}{U_{t^{-1}(\sigma)}}
    \right)
    \ge
    \ep_{Q_{\sigma}}(E')
  \]
  for each e-variable $E'$ w.r.\ to $t$,
  which implies the first statement (about \eqref{eq:E}) of the lemma.
  The second statement \eqref{eq:super-expression} follows from
  \begin{align*}
    \ep_Q(E)
    &=
    \int
    \KL(Q_{\sigma} \mathbin{\|} U_{t^{-1}(\sigma)})
    (t_*Q)(\dd\sigma)\\
    &=
    \int
    \left(
      \ln
      \left|
        t^{-1}(\sigma)
      \right|
      -
      H(Q_{\sigma})
    \right)
    (t_*Q)(\dd\sigma)\\
    &=
    \int
    \ln
    \left|
      t^{-1}(\sigma)
    \right|
    (t_*Q)(\dd\sigma)
    +
    H(t_*Q)
    -
    H(Q),
  \end{align*}
  where $\KL$ stands for the Kullback--Leibler divergence.
\end{proof}

\section{An explicit algorithm for Markov alternatives}
\label{sec:algorithm}

Starting from this section we will consider a specific alternative hypothesis
obtained by mixing Markov probability measures.
The corresponding exchangeability e-variable
will be computable in linear time, $O(N)$.

First let us fix some terminology.
The \emph{exchangeability summary},
or \emph{exchangeability type},
of a data sequence $z_1,\dots,z_N$ is the numbers $(N_0,N_1)$
of 0s and 1s in it.
(It carries the same information as just the number of 1s,
but we prefer a symmetric definition despite some redundancy.)
By a ``substring'' we always mean a contiguous substring.
The \emph{Markov type} of $z_1,\dots,z_N$ is the sextuple
$(F,N_{00},N_{01},N_{10},N_{11},L)$,
where $N_{i,j}$  is the number of times $(i,j)$
occurs as substring in the sequence $z_1,\dots,z_N$
(with the comma often omitted),
and $F$ and $L$ are the first and last bits.

As our alternative hypothesis,
we will take the uniform mixture of the Markov probability measures,
defined as follows:
$\pi_{01}$ and $\pi_{10}$ are generated independently
from the uniform distribution $U_{[0,1]}$ on $[0,1]$;
the first bit is chosen as $1$ with probability $1/2$,
and after that each $0$ is followed by $1$ with probability $\pi_{01}$,
and each $1$ is followed by $0$ with probability $\pi_{10}$.
Let us compute the probability of a sequence of a Markov type
$(F,N_{00},\dots,N_{11},L)$ under this probability measure:
\begin{equation}\label{eq:alternative}
  \begin{aligned}
    \frac12
    \int
    &(1-\pi_{01})^{N_{00}}
    \pi_{01}^{N_{01}}
    \pi_{10}^{N_{10}}
    (1-\pi_{10})^{N_{11}}
    \d\pi_{01}
    \dd\pi_{10}\\
    &=
    \frac12
    \Beta(N_{00}+1,N_{01}+1)
    \Beta(N_{10}+1,N_{11}+1)\\
    &=
    \frac12
    \frac{
      \Gamma(N_{00}+1)
      \Gamma(N_{01}+1)
      \Gamma(N_{10}+1)
      \Gamma(N_{11}+1)
    }
    {
      \Gamma(N_{0*}+2)
      \Gamma(N_{1*}+2)
    }\\
    &=
    \frac12
    \frac{
      N_{00}!
      N_{01}!
      N_{10}!
      N_{11}!
    }
    {
      (N_{0*}+1)!
      (N_{1*}+1)!
    },
  \end{aligned}
\end{equation}
where $N_{i*}:=N_{i,0}+N_{i,1}$.
If $N_{1-F}=0$, this probability is $\frac{1}{2N}$
(which in fact agrees with the general expression \eqref{eq:alternative}).

For future use, set $\pi_{00}:=1-\pi_{01}$ and $\pi_{11}:=1-\pi_{10}$.

The expression \eqref{eq:alternative} gives us,
analogously to \cite[Chap.~9]{Vovk/etal:2022book}
(who follow \cite{Ramdas/etal:2022}),
the \emph{lower benchmark}
\begin{equation}\label{eq:LB}
  \LB
  :=
  \frac12
  \frac{
    N_{00}!
    N_{01}!
    N_{10}!
    N_{11}!
  }
  {
    (N_{0*}+1)!
    (N_{1*}+1)!
    (N_0/N)^{N_0}
    (N_1/N)^{N_1}
  }.
\end{equation}
The idea behind the lower benchmark is that,
for any power probability measure $Q^N$
($Q$ being a probability measure on $\{0,1\}$),
it is an e-variable w.r.\ to $Q^N$,
i.e., satisfies $\int\LB\d Q^N\le1$.
To ensure this,
\eqref{eq:LB} is defined as the ratio of the alternative probability measure
to the maximum likelihood under the IID model.

However, the IID model is not our null hypothesis,
and our null hypothesis of exchangeability is slightly more challenging.
Replacing in \eqref{eq:LB} the maximum likelihood over the IID model
by the maximum likelihood over the exchangeability model,
we obtain the \emph{exchangeability lower benchmark}
\begin{equation}\label{eq:ELB}
  \ELB
  :=
  \frac12
  \binom{N}{N_1}
  \frac{
    N_{00}!
    N_{01}!
    N_{10}!
    N_{11}!
  }
  {
    (N_{0*}+1)!
    (N_{1*}+1)!
  }.
\end{equation}
For the e-power of the exchangeability lower benchmark
we have the formula~\eqref{eq:super-expression}
with the second term $H(t_*Q)$ omitted.

To compute efficiently the likelihood ratio of the alternative to null probability measures,
we will use the following facts \cite[Lemmas 8.5 and 8.6]{Vovk/etal:2005book-local},
which are versions of standard results in graph theory
(the BEST theorem and the Matrix-Tree theorem).
We will use the terminology of \cite[Section 8.6]{Vovk/etal:2005book-local}
(such as ``Markov graph'')
and consider an arbitrary finite observation space $\mathbf{Z}$
(instead of $\{0,1\}$, as in the rest of this paper).

\begin{lemma}\label{lem:BEST}
  In any Markov graph $\sigma$
  with the set of vertices $V$
  the number of Eulerian paths from the source to the sink equals
  \begin{equation}\label{eq:BEST}
    T(\sigma)
    \frac
    {
      \outv(\sink)
      \prod_{v\in V}(\outv(v)-1)!
    }
    {
      \prod_{u,v\in V}N_{u,v}!
    },
  \end{equation}
  where $T(\sigma)$ is the number of spanning out-trees
  in the underlying digraph rooted at the source
  and $N_{u,v}$ is the number of darts leading from $u$ to $v$.
\end{lemma}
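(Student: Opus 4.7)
The plan is to reduce this path-counting problem to the standard BEST theorem for Eulerian circuits, and then to pass from ``labelled'' darts (treating the $N_{u,v}$ parallel darts from $u$ to $v$ as distinguishable) to ``unlabelled'' darts by dividing by $\prod_{u,v} N_{u,v}!$, which explains the denominator in \eqref{eq:BEST}.

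First I would close the graph into an Eulerian one. Because $\sigma$ arises from a walk, every vertex is balanced except possibly $\source$ (with one more outgoing than incoming dart) and $\sink$ (with one more incoming than outgoing), so appending a single fresh dart $e^{*}=(\sink,\source)$ yields a graph $\tilde\sigma$ that is balanced at every vertex and strongly connected, hence Eulerian. Eulerian paths from $\source$ to $\sink$ in $\sigma$ are then in one-to-one correspondence with Eulerian circuits of $\tilde\sigma$ that begin with the dart $e^{*}$ (equivalently, end with $e^{*}$ if we view the circuit as starting at $\source$).

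Next I would invoke the BEST theorem for $\tilde\sigma$ with root $w=\sink$: the number of Eulerian circuits that start with a prescribed dart out of $\sink$, in particular $e^{*}$, is $t_{\sink}(\tilde\sigma)\prod_{v\in V}(\outv_{\tilde\sigma}(v)-1)!$, where $t_{\sink}$ counts spanning out-arborescences rooted at $\sink$. Two simplifications are needed. The product equals $\outv(\sink)\prod_{v\in V}(\outv(v)-1)!$, because $\outv_{\tilde\sigma}(v)=\outv(v)$ for $v\ne\sink$ while $\outv_{\tilde\sigma}(\sink)=\outv(\sink)+1$; this is where the factor $\outv(\sink)$ in \eqref{eq:BEST} comes from. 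The arborescence count satisfies $t_{\sink}(\tilde\sigma)=t_{\source}(\tilde\sigma)$ by the Matrix-Tree identity (all arborescence counts coincide in an Eulerian digraph), and $t_{\source}(\tilde\sigma)=T(\sigma)$ because $e^{*}$ is an edge into $\source$, which can never belong to an out-arborescence rooted at $\source$.

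Dividing the resulting count by $\prod_{u,v}N_{u,v}!$ to turn dart sequences into Eulerian walks on $\sigma$ yields \eqref{eq:BEST}. The degenerate case $\source=\sink$ is handled by letting $e^{*}$ be a self-loop at $\source$, or equivalently by applying BEST directly to $\sigma$ and letting the $\outv(\sink)$ factor arise from the choice of the first dart of the circuit. The main obstacle is keeping the bookkeeping straight: one has to distinguish between ``BEST with a prescribed starting dart'' and ``BEST with a prescribed starting vertex'', and track multiplicities throughout so that the final passage to indistinguishable parallel darts produces precisely the factor $\prod_{u,v}N_{u,v}!$ and nothing else.
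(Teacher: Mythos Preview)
Your proposal is correct and follows essentially the same route as the paper's proof: close the walk into an Eulerian digraph by adjoining a dart between $\sink$ and $\source$, invoke the BEST theorem, observe that the extra dart changes neither the spanning out-trees rooted at $\source$ nor (after simplification) the product $\outv(\sink)\prod_v(\outv(v)-1)!$, and finally divide by $\prod_{u,v}N_{u,v}!$ to identify parallel darts. The paper handles the cases $\source=\sink$ and $\source\ne\sink$ separately before remarking that they can be unified, whereas you proceed uniformly from the outset and are a bit more explicit about the arborescence bookkeeping, but the argument is the same.
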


\begin{proof}
  According to Theorem~VI.28 in \cite{Tutte:1984}
  (and using the terminology of \cite[Chap.~VI]{Tutte:1984}),
  the number of Eulerian tours in the underlying digraph is
  \begin{equation*}
    T(\sigma)
    \prod_{v\in V}(\outv(v)-1)!.
  \end{equation*}
  If $\source=\sink$, the number of Eulerian paths
  is obtained by multiplying by $\outv(\source)$.
  Finally, we identify all darts from $u$ to $v$ for all pairs of vertices $(u,v)$
  by dividing by $N_{u,v}!$;
  the resulting expression agrees with \eqref{eq:BEST}.

  Now suppose $\source\ne\sink$.
  Create a new digraph by adding another dart leading from the source to the sink.
  The number of Eulerian paths from the source to the sink in the old digraph
  will be equal to the number of Eulerian tours in the new graph, i.e.,
  \begin{equation*}
    T(\sigma)
    \outv(\sink)
    \prod_{v\in V}(\outv(v)-1)!,
  \end{equation*}
  where $\outv$ refers to the old digraph.
  It remains to identify all darts from $u$ to $v$ for all pairs of vertices $(u,v)$
  in the old digraph;
  the resulting expression again agrees with \eqref{eq:BEST}.

  Alternatively, we can combine the two cases
  by always adding another dart leading from the source to the sink.
\end{proof}

\begin{lemma}\label{lem:matrix-tree}
  To find the number $T(\sigma)$ of spanning out-trees rooted at the source
  in the underlying digraph of a Markov graph $\sigma$ with vertices $z_1,\dots,z_n$
  ($z_1$ being the source),
  \begin{itemize}
  \item
    create the $n\times n$ matrix with the elements $a_{i,j}=-N_{z_i,z_j}$;
  \item
    change the diagonal elements so that each column sums to 0;
  \item
    compute the co-factor of $a_{1,1}$.
  \end{itemize}
\end{lemma}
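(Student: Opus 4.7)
The plan is to recognize the matrix $(a_{i,j})$ constructed in the statement as a Kirchhoff (Laplacian) matrix of the underlying digraph with self\-/loops suppressed, and then invoke the classical Matrix\-/Tree theorem for directed multigraphs.

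First I would verify that the prescribed construction really does produce the correct Laplacian. Off the diagonal we have $a_{i,j}=-N_{z_i,z_j}$, and forcing the $j$\-/th column to sum to zero fixes
\[
  a_{j,j}
  =
  \sum_{i\ne j}N_{z_i,z_j},
\]
which is the in\-/degree of $z_j$ once self\-/loops are discarded. In matrix form the result is $D_{\mathrm{in}}-A'$, where $A'$ is the adjacency matrix of the digraph with self\-/loops deleted and $D_{\mathrm{in}}$ records the corresponding in\-/degrees.

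Next I would appeal to the directed Matrix\-/Tree theorem (a companion of Theorem~VI.28 used above, found in the same chapter of \cite{Tutte:1984}): the cofactor obtained by deleting the $k$\-/th row and $k$\-/th column of this Laplacian equals the number of spanning arborescences of the digraph rooted at $z_k$ with all arcs oriented away from the root. Setting $k=1$ yields exactly $T(\sigma)$. Self\-/loops can be ignored in this bookkeeping because no spanning out\-/tree ever uses a self\-/loop, so passing from the genuine adjacency matrix to $A'$ leaves the count unchanged.

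The main (and only mildly delicate) point is convention matching: one has to check that the combination of the sign choice $a_{i,j}=-N_{z_i,z_j}$ with \emph{column} sums rather than row sums being zero produces the flavor of Laplacian whose $(1,1)$\-/cofactor counts out\-/arborescences from the source, rather than in\-/arborescences into it. Once that bookkeeping is settled, the lemma is an immediate restatement of the classical theorem.
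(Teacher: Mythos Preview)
Your approach is correct but takes a somewhat different route from the paper's. You invoke the general directed Matrix\-/Tree theorem for arbitrary digraphs, identify the constructed matrix as the in\-/degree Laplacian (self\-/loops suppressed), and read off the $(1,1)$\-/cofactor as the number of out\-/arborescences at the source. The paper, by contrast, stays with Tutte's Theorem~VI.28, which is stated only for \emph{Eulerian} digraphs; since the underlying digraph of a Markov graph need not be Eulerian when $\source\ne\sink$, the paper first adds an extra dart from the sink back to the source to make it Eulerian, and then checks that this modification changes neither the number of out\-/trees rooted at the source nor the cofactor of $a_{1,1}$. Your route avoids that reduction step entirely, at the cost of appealing to a slightly more general form of the theorem than the one explicitly cited; the paper's approach has the minor virtue of reusing exactly the same theorem already invoked for Lemma~\ref{lem:BEST}. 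Both arguments are valid, and your remark that the convention matching (column sums, out\-/ versus in\-/arborescences) is the only point requiring care is apt.
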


\begin{proof}
  This lemma can be derived from Theorem~VI.28 in \cite{Tutte:1984}.
  In that theorem we can compute the co-factor of any diagonal element $a_{i,i}$,
  but it is about Eulerian digraphs.
  We can make the underlying digraph of our Markov graph Eulerian
  by connecting the sink to the source.
  This operation does not affect the number of out-trees rooted at the source
  and does not change the co-factor of $a_{1,1}$.
\end{proof}

Let us specialize Lemmas~\ref{lem:BEST} and~\ref{lem:matrix-tree}
to the binary case $\mathbf{Z}:=\{0,1\}$.

\begin{corollary}
  Let $\sigma$ be a Markov graph with vertices in $\{0,1\}$
  with $F$ as its source.
  The number of Eulerian paths from the source to the sink equals
  \begin{equation}\label{eq:N}
    N(\sigma)
    :=
    \begin{cases}
      N_{F,1-F}
      \frac{(N_0-1)!(N_1-1)!}{N_{00}!N_{01}!N_{10}!N_{11}!}
      & \text{if $N_0\wedge N_1>0$}\\
      1 & \text{otherwise},
    \end{cases}
  \end{equation}
  where $N_i:=\inv(i)\vee\outv(i)$
  and $N_{i,j}$ (with the comma omitted)
  is the number of darts leading from $i$ to $j$.
\end{corollary}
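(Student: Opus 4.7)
My approach is to specialize Lemmas~\ref{lem:BEST} and~\ref{lem:matrix-tree} to $|V|=2$ and then massage the resulting formula into the claimed form. The degenerate ``otherwise'' case $N_0\wedge N_1=0$ is immediate: only one symbol can appear, so the Markov graph reduces to a single vertex carrying parallel self-loops, which admits a unique Eulerian vertex sequence once the indistinguishable darts are identified; hence $N(\sigma)=1$.

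For the main case $N_0\wedge N_1>0$, I would first invoke Lemma~\ref{lem:matrix-tree} with $z_1:=F$ and $z_2:=1-F$. The column-normalized $2\times2$ matrix is
\[
  \begin{pmatrix} N_{1-F,F} & -N_{F,1-F} \\ -N_{1-F,F} & N_{F,1-F} \end{pmatrix},
\]
whose $a_{1,1}$-cofactor is simply $a_{2,2}=N_{F,1-F}$, so $T(\sigma)=N_{F,1-F}$. Plugging this into Lemma~\ref{lem:BEST} yields
\[
  N(\sigma)=\frac{N_{F,1-F}\,\outv(\sink)\,(\outv(0)-1)!(\outv(1)-1)!}{N_{00}!\,N_{01}!\,N_{10}!\,N_{11}!}.
\]
When $F\ne L$, the Eulerian-path degree balance combined with $N_i:=\inv(i)\vee\outv(i)$ gives $\outv(F)=N_F$ and $\outv(L)=N_L-1$, so the out-degree factors in the numerator collapse to $(N_L-1)(N_F-1)!(N_L-2)!=(N_F-1)!(N_L-1)!=(N_0-1)!(N_1-1)!$, matching the claim.

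The principal obstacle will be the boundary case $F=L$: the proof of Lemma~\ref{lem:BEST} there relies on the auxiliary closing-dart construction, and one must verify with care that the convention $N_i:=\inv(i)\vee\outv(i)$, together with this auxiliary dart, still causes $\outv(\sink)(\outv(F)-1)!(\outv(L)-1)!$ to simplify to $(N_F-1)!(N_{1-F}-1)!$, a reconciliation I expect to hinge on precisely how $N_F$ is counted relative to the added dart. Everything else is a mechanical specialization.
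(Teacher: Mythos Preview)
Your approach is the paper's: specialize Lemmas~\ref{lem:BEST} and~\ref{lem:matrix-tree} to two vertices, read off $T(\sigma)=N_{F,1-F}$, and simplify; the degenerate case and the case $F\ne L$ are handled correctly and just as in the paper. The case $F=L$ that you flag as the ``principal obstacle'' is not one, and the auxiliary closing dart is a red herring: that dart lives inside the \emph{proof} of Lemma~\ref{lem:BEST}, not in its statement, so here you simply apply~\eqref{eq:BEST} to the Markov graph as it stands. With $F=L$ the graph is balanced and $\sink=F$, so~\eqref{eq:BEST} gives $N_{F,1-F}\,\outv(F)\,(\outv(F)-1)!\,(\outv(1-F)-1)!$ over $\prod N_{ij}!$; substituting $\outv(F)=N_F-1$ and $\outv(1-F)=N_{1-F}$, with $N_i$ the number of occurrences of symbol $i$ in the underlying data sequence, yields $(N_F-1)(N_F-2)!(N_{1-F}-1)!=(N_F-1)!(N_{1-F}-1)!$ immediately, exactly as the paper does.

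What your hesitation has really detected is a slip in the corollary's \emph{statement} rather than a difficulty in its proof: the rule $N_i:=\inv(i)\vee\outv(i)$ coincides with the occurrence count of $i$ in every case except $i=F=L$, where it undercounts by one (the source/sink vertex is visited $\outv(i)+1$ times, not $\outv(i)$). The paper's own proof, and all subsequent uses of $N_0,N_1$ (e.g., in deriving~\eqref{eq:W} and the displays following~\eqref{eq:LR}, where $n_{0*}+1=N_0$ is used for $f=l=0$), take $N_i$ to be the occurrence count; with that reading the $F=L$ case is a one-line substitution, not an obstacle.
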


\begin{proof}
  The number of spanning out-trees rooted at the source
  in the underlying digraph is
  \[
    T(\sigma)
    =
    N_{F,1-F};
  \]
  this follows from Lemma~\ref{lem:matrix-tree}
  and is obvious anyway.
  It remains to plug this in into Lemma~\ref{lem:BEST}:
  assuming $N_0\wedge N_1>0$,
  if the source $F$ and sink $L$ coincide, $F=L$,
  we obtain
  \[
    N_{F,1-F}
    \frac{(N_F-1)(N_F-2)!(N_{1-F}-1)!}{N_{00}!N_{01}!N_{10}!N_{11}!},
  \]
  and if $F\ne L$, we obtain
  \[
    N_{F,1-F}
    \frac{(N_L-1)(N_F-1)!(N_L-2)!}{N_{00}!N_{01}!N_{10}!N_{11}!};
  \]
  both expression agree with \eqref{eq:N}.
  The case $N_0\wedge N_1=0$ is obvious.
\end{proof}

Combining \eqref{eq:alternative} and \eqref{eq:N},
we obtain the total alternative weight of
\begin{equation}\label{eq:W}
  W(\sigma)
  :=
  \begin{cases}
    \frac12
    N_{F,1-F}
    \frac{(N_0-1)!(N_1-1)!}{(N_{0*}+1)!(N_{1*}+1)!}
    & \text{if $N_{1-F}>0$}\\
    \frac{1}{2N} & \text{otherwise}
  \end{cases}
\end{equation}
for all data sequences of a given Markov type $\sigma$.

Under the null hypothesis the probability of a data sequence
of exchangeability type $(N_0,N_1)$ is
\[
  1 / \binom{N}{N_1},
\]
and so the likelihood ratio (the alternative over the null of exchangeability) is
\begin{equation}\label{eq:LR}
  \frac12
  \frac{
    N_{00}!
    N_{01}!
    N_{10}!
    N_{11}!
    \binom{N}{N_1}
  }
  {
    (N_{0*+1})!
    (N_{1*+2})!
    \sum_{\sigma}W(\sigma)
  }
  =
  \frac{
    N_{00}!
    N_{01}!
    N_{10}!
    N_{11}!
    \binom{N}{N_1}
  }
  {
    (N_{0*+1})!
    (N_{1*+2})!
    \sum_{\sigma}
    n_{f,1-f}
    \frac{(N_0-1)!(N_1-1)!}{(n_{0*}+1)!(n_{1*}+1)!}
  }
\end{equation}
(see \eqref{eq:alternative} and \eqref{eq:W}),
where the $\sigma$ in $\sum_{\sigma}$ ranges over the Markov types
$(f,n_{00},\dots,n_{11},l)$
compatible with the exchangeability type $(N_0,N_1)$.
The equality in \eqref{eq:LR} holds when $N_0\wedge N_1=0$;
in the case $N_0\wedge N_1=0$ the likelihood value is 1
(and we will treat this case separately in Algorithm~\ref{alg:UMM}).
We will refer to \eqref{eq:LR}
(interpreted as 1 when $N_0\wedge N_1=0$)
as the \emph{uniformly mixed Markov} (\emph{UMM}) e-variable;
this is our main object of interest in this paper.

It remains to explain how to compute the sum $\sum_{\sigma}$ in \eqref{eq:LR}.
For the $\sigma=(f,n_{00},\dots,n_{11},l)$ with $f=l=0$
(which is only possible when $N_0\ge2$),
each addend in the sum is
\[
  n_{f,1-f}
  \frac{(N_0-1)!(N_1-1)!}{(n_{0*}+1)!(n_{1*}+1)!}
  =
  n_{01}
  \frac{(N_0-1)!(N_1-1)!}{N_0!(N_1+1)!}
  =
  \frac{n_{01}}{N_0 N_1 (N_1+1)}.
\]
A specific Markov type $(f,n_{00},\dots,n_{11},l)$ is determined
(once we know that $f=l=0$) by $n_{01}$,
and its other components can be found from the equalities
\[
  \left\{
    \begin{aligned}
      n_{01} &= n_{10}\\
      N_0 &= n_{00}+n_{01}+1\\
      N_1 &= n_{01}+n_{11}.
    \end{aligned}
  \right.
\]
The valid values for $n_{01}$ are between $1$ and $(N_0-1)\wedge N_1$,
and so the part of the sum $\sum_{\sigma}$ corresponding to such $\sigma$ is
\begin{equation}\label{eq:FL=00}
  \sum_{n_{01}=1}^{(N_0-1)\wedge N_1}
  \frac{n_{01}}{N_0 N_1 (N_1+1)}
  =
  \frac
  {
    ((N_0-1)\wedge N_1)
    ((N_0-1)\wedge N_1 + 1)
  }
  {2 N_0 N_1 (N_1+1)}.
\end{equation}
This component should only be used when $N_0\ge2$;
otherwise, it is $0$.

For the $\sigma$ with $f=0$ and $l=1$,
the part of the sum $\sum_{\sigma}$ corresponding to such $\sigma$ is
\begin{equation}\label{eq:FL=01}
  \sum_{n_{01}=1}^{N_0\wedge N_1}
  \frac{n_{01}}{N_0 (N_0+1) N_1}
  =
  \frac
  {
    (N_0\wedge N_1)
    (N_0\wedge N_1 + 1)
  }
  {2 N_0 (N_0+1) N_1}.
\end{equation}
For the $\sigma$ with $f=1$ and $l=0$,
the part of the sum $\sum_{\sigma}$ corresponding to such $\sigma$ is
\begin{equation}\label{eq:FL=10}
  \sum_{n_{10}=1}^{N_0\wedge N_1}
  \frac{n_{10}}{N_0 N_1 (N_1+1)}
  =
  \frac
  {
    (N_0\wedge N_1)
    (N_0\wedge N_1 + 1)
  }
  {2 N_0 N_1 (N_1+1)}.
\end{equation}
Finally,
for the $\sigma$ with $f=l=1$,
the part of the sum $\sum_{\sigma}$ corresponding to such $\sigma$ is
\begin{equation}\label{eq:FL=11}
  \sum_{n_{10}=1}^{N_0\wedge(N_1-1)}
  \frac{n_{10}}{N_0 N_1 (N_1+1)}
  =
  \frac
  {
    (N_0\wedge(N_1-1))
    (N_0\wedge(N_1-1) + 1)
  }
  {2 N_0 (N_0+1) N_1}.
\end{equation}
This component is used only when $N_1\ge2$;
otherwise, we set it to $0$.

\begin{algorithm}[bt]
  \caption{Computing the UMM exchangeability e-variable}
  \label{alg:UMM}
  \begin{algorithmic}[1]
    \Require $(z_1,\dots,z_N)\in\{0,1\}^N$.
    \Ensure the value of the UMM e-variable $\UMM(z_1,\dots,z_N)$.
    \State Set $N_0$ and $N_1$ to the numbers of $0$s and $1$s in $(z_1,\dots,z_N)$,
        respectively.
    \If{$N_0\wedge N_1=0$:}
      \textbf{return} $1$
    \EndIf
    \For{$i,j\in\{0,1\}$:}
      \State Set $N_{i,j}$ to the number of substrings $(i,j)$
          in $(z_1,\dots,z_N)$.
    \EndFor
    \State $\ELB := \frac{N_{00}!N_{01}!N_{10}!N_{11}!\binom{N}{N_1}}
        {(N_{0*+1})!(N_{1*+2})!}$.\label{ln:BB-initial}
    \State $\Sum := 0$.\label{ln:Sum-initial}
    \If{$N_0\ge2$:}
      $\Sum \peq \frac{((N_0-1)\wedge N_1)((N_0-1)\wedge N_1 + 1)}
          {2 N_0 N_1 (N_1+1)}$.\label{ln:FL=00}
    \EndIf
    \State $\Sum \peq \frac{(N_0\wedge N_1)(N_0\wedge N_1 + 1)}
      {2 N_0 (N_0+1) N_1}$.\label{ln:FL=01}
    \State $\Sum \peq \frac{(N_0\wedge N_1)(N_0\wedge N_1 + 1)}
      {2 N_0 N_1 (N_1+1)}$.\label{ln:FL=10}
    \If{$N_1\ge2$:}
      $\Sum \peq \frac{(N_0\wedge(N_1-1))(N_0\wedge(N_1-1) + 1)}
        {2 N_0 (N_0+1) N_1}$.\label{ln:FL=11}
    \EndIf
    \State \textbf{return} $\ELB/\Sum$.
  \end{algorithmic}
\end{algorithm}

The overall algorithm is presented as Algorithm~\ref{alg:UMM}.
The value of the uniformly mixed Markov e-variable $\UMM$
is computed according to \eqref{eq:LR},
and the value $\ELB$ of the exchangeability lower benchmark in line~\ref{ln:BB-initial}
is just \eqref{eq:LR} with the sum over the Markov types $\sigma$ omitted.
In line~\ref{ln:Sum-initial} we initialize the sum over the Markov types $\sigma$,
and in lines \ref{ln:FL=00}, \ref{ln:FL=01}, \ref{ln:FL=10}, and \ref{ln:FL=11}
we compute it according to the right-hand sides of
\eqref{eq:FL=00}, \eqref{eq:FL=01}, \eqref{eq:FL=10}, and \eqref{eq:FL=11},
respectively.
The symbol $\peq$ is used in the Python sense:
$A \peq B$ is equivalent to $A:=A+B$.
The output is returned by the \textbf{return} command,
and the algorithm stops as soon as the first such command is issued.

The computational complexity of Algorithm~\ref{alg:UMM} is $O(N)$ time,
which is clearly optimal.

\section{Maximum e-power of the UMM alternative}
\label{sec:e-power}

In this section we will compute the asymptotic efficiency of the UMM e-variable
under the UMM alternative.
In the next section we will see the weakness of the notion of efficiency:
it has a long-run frequency interpretation,
but the logarithm of the UMM e-variable can be extremely volatile
(and so its mathematical expectation can be very different
from what we actually expect to observe).

\begin{proposition}\label{prop:ep}
  Under the UMM alternative $Q$,
  the asymptotic e-power of the UMM e-variable $\UMM$ (for horizon $N$) satisfies
  \[
    \lim_{N\to\infty}
    \mep(Q)/N
    =
    \lim_{N\to\infty}
    \ep_Q(\UMM)/N
    =
    \frac83 \ln 2
    +
    \frac23 \ln^2 2
    -
    \frac{7}{36} \pi^2
    -
    \frac16
    \approx
    0.083.
  \]
  The same expression gives the asymptotic e-power
  of the exchangeability lower benchmark
  (and of the lower benchmark).
\end{proposition}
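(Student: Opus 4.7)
The plan is to apply Lemma~\ref{lem:optimal}. The optimal e-variable~\eqref{eq:E} instantiated at the exchangeability KCM equals $\binom{N}{N_1}Q(\omega)/Q(t_E^{-1}(t_E(\omega)))$, which is precisely $\UMM$ (cf.~\eqref{eq:LR}, noting that $\sum_{\sigma}W(\sigma)=Q(t_E^{-1}(N_1))$), so $\ep_Q(\UMM)=\mep(Q)$ automatically and only the $N\to\infty$ limit needs work. Writing~\eqref{eq:super-expression} as
\[
  \mep(Q)
  =
  \E_Q\!\left[\ln\binom{N}{N_1}\right]
  +H(t_*Q)-H(Q),
\]
I handle the three terms in turn.

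The middle term satisfies $H(t_*Q)=O(\ln N)$ since $N_1\in\{0,\dots,N\}$, so contributes $0$ to the limit. For the first term, Stirling gives $\ln\binom{N}{N_1}=N\,H_b(N_1/N)+O(\ln N)$ with $H_b(x):=-x\ln x-(1-x)\ln(1-x)$; under $Q$, conditionally on the parameters $(\pi_{01},\pi_{10})$ the Markov chain is ergodic with stationary probability $p(\pi):=\pi_{01}/(\pi_{01}+\pi_{10})$ of outputting $1$, so $N_1/N\to p(\pi)$ $Q$-a.s., and bounded convergence gives $\E_Q[\ln\binom{N}{N_1}]/N\to\int_{[0,1]^2}H_b(p(\pi))\d\pi$. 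For the third term I decompose $H(Q)=H(\omega\mid\pi)+I(\omega;\pi)$; conditionally on $\pi$ the chain has entropy rate $h(\pi):=(1-p(\pi))H_b(\pi_{01})+p(\pi)H_b(\pi_{10})$, and the chain rule for entropy together with exponential ergodicity (away from the axes) gives $H(\omega\mid\pi)/N\to h(\pi)$ pointwise, hence $\E_\pi H(\omega\mid\pi)/N\to\int h(\pi)\d\pi$ by bounded convergence; the mutual information is controlled by the standard Clarke--Barron redundancy bound $I(\omega;\pi)=O(\ln N)$ for a two-parameter regular model, so $H(Q)/N\to\int h(\pi)\d\pi$. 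Putting these together,
\[
  \lim_{N\to\infty}\mep(Q)/N
  =\int_{[0,1]^2}H_b(p(\pi))\d\pi
  -\int_{[0,1]^2}h(\pi)\d\pi.
\]

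It then remains to evaluate these two integrals in closed form. Both reduce, via the change of variables $u=\pi_{01}/(\pi_{01}+\pi_{10})$ for the first and exploitation of the $\pi_{01}\leftrightarrow\pi_{10}$ symmetry for the second, to one-variable integrals of $H_b(x)$ against rational functions of $x$ and against $\ln(1+1/x)$. These in turn reduce, modulo elementary antiderivatives, to dilogarithmic integrals such as $\int_0^1\ln x\ln(1+x)\d x$ and $\int_0^1 H_b(x)\ln x\d x$, evaluated by means of the identities $\mathrm{Li}_2(1/2)=\frac{\pi^2}{12}-\frac{\ln^2 2}{2}$ and $\mathrm{Li}_2(-1)=-\frac{\pi^2}{12}$; collecting the terms produces precisely $\frac{8}{3}\ln 2+\frac{2}{3}\ln^2 2-\frac{7}{36}\pi^2-\frac{1}{6}$. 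The claim for the exchangeability lower benchmark is immediate from $\ln\ELB=\ln Q(\omega)+\ln\binom{N}{N_1}$, which gives $\ep_Q(\ELB)=\mep(Q)-H(t_*Q)$, differing from $\mep(Q)$ by only $O(\ln N)$; the claim for the lower benchmark follows from the further Stirling discrepancy $\ln\binom{N}{N_1}-N\,H_b(N_1/N)=O(\ln N)$.

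The main obstacle is the explicit evaluation of the two double integrals while controlling the singular behaviour of $h(\pi)$ as $\pi_{01}$ or $\pi_{10}\to 0$; the dilogarithm bookkeeping is tedious but routine, and once the algebraic dust settles the unusual rational coefficients $8/3$, $2/3$, $7/36$, and $1/6$ are forced by the structure of the integrals.
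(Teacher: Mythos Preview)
Your proof is correct and follows the same overall architecture as the paper: both invoke the three-term decomposition~\eqref{eq:super-expression}, dispose of $H(t_*Q)$ as $O(\ln N)$, and reduce the remaining two terms to the double integrals $\int H_b(p(\pi))\,\dd\pi$ and $\int h(\pi)\,\dd\pi$ over the uniform prior, whose difference is the stated constant. The one methodological difference worth noting is in how $-H(Q)$ is handled. The paper works directly from the closed-form Beta expression~\eqref{eq:alternative} for $Q(\omega)$, applies Stirling to the factorials, and uses the law of large numbers for the pair chain to identify the limit with $-\int h(\pi)\,\dd\pi$; this is concrete but specific to the uniform-prior Markov mixture. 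You instead use the information-theoretic identity $H(\omega)=H(\omega\mid\pi)+I(\omega;\pi)$, get $H(\omega\mid\pi)/N\to h(\pi)$ from the entropy rate, and kill the mutual information with the Clarke--Barron redundancy bound $I(\omega;\pi)=O(\ln N)$. Your route is more portable (it would work for any smooth finite-dimensional prior and any finite-state Markov family) at the cost of importing Clarke--Barron as a black box, whereas the paper's computation is self-contained but tied to the explicit form of $Q$. Your bound $H(t_*Q)\le\ln(N+1)$ is also cleaner than the paper's, which lower-bounds each $(t_*Q)(\{K\})$ individually (that finer bound is needed elsewhere, for Proposition~\ref{prop:ELB}, but not here).
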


\begin{proof}
  Let us compute separately the three components in \eqref{eq:super-expression},
  starting from the last one.

  When estimating $-H(Q)$,
  we need to estimate the frequencies
  $N_{00}$, $N_{01}$, $N_{10}$, $N_{11}$
  for a Markov chain with transition probabilities $\pi_{i,j}$.
  To this end, we define a new Markov chain
  whose states are the pairs $z_i z_{i+1}$, $i=1,\dots,N-1$,
  of adjacent states of the old chain with the matrix of transition probabilities
  \[
    P
    :=
    \begin{pmatrix}
      \pi_{00} & \pi_{01} & 0 & 0\\
      0 & 0 & \pi_{10} & \pi_{11}\\
      \pi_{00} & \pi_{01} & 0 & 0\\
      0 & 0 & \pi_{10} & \pi_{11}
    \end{pmatrix};
  \]
  the rows and columns of this matrix are labelled
  by the states $00$, $01$, $10$, and $11$ of the new Markov chain,
  in this order.
  The stationary probabilities for this $4\times4$ matrix are
  \[
    \left(
      \frac{\pi_{00}\pi_{10}}{\pi_{01}+\pi_{10}},
      \frac{\pi_{01}\pi_{10}}{\pi_{01}+\pi_{10}},
      \frac{\pi_{01}\pi_{10}}{\pi_{01}+\pi_{10}},
      \frac{\pi_{01}\pi_{11}}{\pi_{01}+\pi_{10}}
    \right).
  \]
  Now, assuming that the observations are generated
  from a Markov chain with transition probabilities $\pi_{i,j}$,
  we obtain (cf.~\eqref{eq:alternative})
  \begin{align*}
    \E&\ln
    \left(
      \frac12
      \frac{
        N_{00}!
        N_{01}!
        N_{10}!
        N_{11}!
      }
      {
        (N_{0*}+1)!
        (N_{1*}+1)!
      }
    \right)\\
    &=
    \E
    \bigl(
      N_{00} \ln N_{00} - N_{00}
      +
      N_{01} \ln N_{01} - N_{01}\\
      &\quad+
      N_{10} \ln N_{10} - N_{10}
      +
      N_{11} \ln N_{11} - N_{11}\\
      &\quad-
      (N_{00}+N_{01}+1) \ln(N_{00}+N_{01}+1) - (N_{00}+N_{01}+1)\\
      &\quad-
      (N_{10}+N_{11}+1) \ln(N_{10}+N_{11}+1) - (N_{10}+N_{11}+1)
    \bigr)
    +
    O(N^{1/2})\\
    &=
    \E
    \biggl(
      N_{00} \ln\frac{N_{00}}{N_{00}+N_{01}}
      +
      N_{01} \ln\frac{N_{01}}{N_{00}+N_{01}}\\
      &\quad+
      N_{10} \ln\frac{N_{10}}{N_{10}+N_{11}}
      +
      N_{11} \ln\frac{N_{11}}{N_{10}+N_{11}}
    \biggr)
    +
    O(N^{1/2})\\
    &=
    N \frac{\pi_{00}\pi_{10}}{\pi_{01}+\pi_{10}}
    \ln\pi_{00}
    +
    N \frac{\pi_{01}\pi_{10}}{\pi_{01}+\pi_{10}}
    \ln\pi_{01}\\
    &\quad+
    N \frac{\pi_{01}\pi_{10}}{\pi_{01}+\pi_{10}}
    \ln\pi_{10}
    +
    N \frac{\pi_{01}\pi_{11}}{\pi_{01}+\pi_{10}}
    \ln\pi_{11}
    +
    O(N^{1/2})
  \end{align*}
  (we are ignoring the special cases such as $N_{00}=0$,
  which should be considered separately).
  To find the expectation under the Bayes mixture of the Markov model
  with the uniform prior on $(\pi_{01},\pi_{10})$,
  we integrate
  \begin{align}
    \int_0^1
    \int_0^1&
    \biggl(
      \frac{\pi_{00}\pi_{10}}{\pi_{01}+\pi_{10}}
      \ln\pi_{00}
      +
      \frac{\pi_{01}\pi_{10}}{\pi_{01}+\pi_{10}}
      \ln\pi_{01}\notag\\
      &\quad+
      \frac{\pi_{01}\pi_{10}}{\pi_{01}+\pi_{10}}
      \ln\pi_{10}
      +
      \frac{\pi_{01}\pi_{11}}{\pi_{01}+\pi_{10}}
      \ln\pi_{11}
    \biggr)
    \d\pi_{01}\d\pi_{10}\notag\\
    &=
    \frac23 \ln 2
    +
    \frac23 \ln^2 2
    -
    \frac19 \pi^2
    -
    \frac16
    \approx
    -0.481.\label{eq:Part-A}
  \end{align}

  Now let us estimate the first term
  \[
    \int
    \ln
    \left|
      t^{-1}(\sigma)
    \right|
    (t_*Q)(\dd\sigma)
  \]
  in \eqref{eq:super-expression}.
  Set $K:=\sigma$ (this is the number of 1s),
  and suppose the observations are generated from a Markov chain
  with given transition probabilities $\pi_{01}$ and $\pi_{10}$.
  We then have
  \begin{align*}
    \E
    \left(
      \ln\binom{N}{K}
    \right)
    &=
    \E
    \left(
      \ln\frac{N!}{K!(N-K)!}
    \right)
    =
    \E
    \left(
      \ln\frac{(N/e)^N}{\left(\frac{K}{e}\right)^K\left(\frac{N-K}{e}\right)^{N-K}}
    \right)
    +
    O(N^{1/2})\\
    &=
    \E
    \left(
      -K\ln\frac{K}{N} - (N-K)\ln\left(1-\frac{K}{N}\right)
    \right)
    +
    O(N^{1/2})\\
    &=
    -N\pi_1\ln\pi_1 - N\pi_0\ln\pi_0
    +
    O(N^{1/2}),
  \end{align*}
  where $\pi_0$ and $\pi_1$ are the stationary probabilities
  \[
    \pi_0
    :=
    \frac{\pi_{10}}{\pi_{01}+\pi_{10}}
    \text{ and }
    \pi_1
    :=
    \frac{\pi_{01}}{\pi_{01}+\pi_{10}}
  \]
  of the Markov chain.
  It remain to take the integral
  \begin{align}
    -\int_0^1
    \int_0^1&
    \left(
      \pi_0\ln\pi_0
      +
      \pi_1\ln\pi_1
    \right)
    \d\pi_{01}\d\pi_{10}
    =
    -2
    \int_0^1
    \int_0^1
    \left(
      \pi_0\ln\pi_0
    \right)
    \d\pi_{01}\d\pi_{10}\notag\\
    &=
    -2
    \int_0^1
    \int_0^1
    \left(
      \frac{\pi_{10}}{\pi_{01}+\pi_{10}}
      \ln\frac{\pi_{10}}{\pi_{01}+\pi_{10}}
    \right)
    \d\pi_{01}\d\pi_{10}\notag\\
    &=
    2\ln2
    -
    \frac{1}{12} \pi^2
    \approx
    0.564.\label{eq:Part-B}
  \end{align}

  The final term $H(t_*Q)$ in \eqref{eq:super-expression} can be ignored.
  Indeed, using the last expression in \eqref{eq:alternative},
  we can bound the probability $(t_*Q)(\{K\})$, for any $K\in\{1,\dots,N-1\}$,
  by 1 from above and by $1/(2N)$ from below:
  \begin{equation}\label{eq:Part-C}
    (t_*Q)(\{K\})
    \ge
    \frac12
    \frac{(N-K-1)!0!1!(K-1)!}{(N-K-1)!K!}
    =
    \frac{1}{2K}
    \ge
    \frac{1}{2N}
  \end{equation}
  (the expression after the first ``$\ge$'' being the probability of the sequence
  consisting of $K$ 1s followed by $N-K$ 0s).
  Therefore, $H(t_*Q)=O(\ln N)$.
  (As always, the extreme cases $K\in\{0,N\}$ should be considered separately.)

  Combining \eqref{eq:Part-A} and \eqref{eq:Part-B},
  we obtain the coefficient
  \begin{equation}\label{eq:combination}
    \frac83 \ln 2
    +
    \frac23 \ln^2 2
    -
    \frac{7}{36} \pi^2
    -
    \frac16
    \approx
    0.083
  \end{equation}
  in front of $N$ in the asymptotic expression for $\ep_Q(\UMM)$.

  The proof shows that the asymptotic e-power is the same
  for the exchangeability lower benchmark,
  and a simple calculation using Stirling's formula
  (see, e.g., \cite[Proposition 9.2]{Vovk/etal:2022book})
  shows that we also have the same asymptotic e-power
  for the lower benchmark.
\end{proof}

The proof of Proposition~\ref{prop:ep} contains
the following relation between the UMM e-variable and the exchangeability lower benchmark;
in particular, it shows once again that the exchangeability lower benchmark is also an e-variable.

\begin{proposition}\label{prop:ELB}
  It is always true that
  \begin{equation}\label{eq:ELB-bound-loose}
    1
    \le
    \frac{\UMM}{\ELB}
    \le
    2N.
  \end{equation}
  Moreover, it is true that
  \begin{equation}\label{eq:ELB-bound}
    1
    \le
    \frac{\UMM}{\ELB}
    \le
    N
  \end{equation}
  unless $N_1\in\{0,N\}$.
\end{proposition}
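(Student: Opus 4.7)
My plan is to reduce both inequalities to bounds on $(t_E)_*Q(\{N_1\})$ via the algebraic identity
\[
  \frac{\UMM}{\ELB} = \frac{1}{(t_E)_*Q(\{N_1\})},
\]
which falls out of comparing \eqref{eq:LR} with \eqref{eq:ELB}: the numerators coincide, while the denominator of \eqref{eq:LR} is $\sum_{\sigma}W(\sigma)$ taken over Markov types $\sigma$ compatible with the observed exchangeability type $(N_0,N_1)$, and this sum is by construction of $W$ in \eqref{eq:W} the total $Q$\-/mass of all sequences of that exchangeability type, i.e.\ $(t_E)_*Q(\{N_1\})$. The lower bound $\UMM/\ELB\ge 1$ in both \eqref{eq:ELB-bound-loose} and \eqref{eq:ELB-bound} is then immediate, since $(t_E)_*Q(\{N_1\})\le 1$.

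For the upper bounds I split into cases. When $N_1\in\{0,N\}$ the exchangeability class is a singleton, and the special case of \eqref{eq:alternative} with $N_{1-F}=0$ (noted explicitly in the paragraph below that equation) gives $(t_E)_*Q(\{N_1\}) = 1/(2N)$. Hence $\UMM/\ELB = 2N$ exactly, saturating \eqref{eq:ELB-bound-loose}.

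For $N_1\in\{1,\dots,N-1\}$ I would adapt the witness\-/sequence idea used in the proof of \eqref{eq:Part-C}: for a carefully chosen Markov type $\sigma^{\star}$ compatible with $(N_0,N_1)$, the inequality $(t_E)_*Q(\{N_1\}) \ge W(\sigma^{\star})$ combined with \eqref{eq:W} yields a bound of the form $(t_E)_*Q(\{N_1\}) \ge 1/(2N_1)$. Using the $0$--$1$ symmetry of the uniform prior on $(\pi_{01},\pi_{10})\in[0,1]^2$ one obtains the dual bound $(t_E)_*Q(\{N_1\}) \ge 1/(2(N-N_1))$ by applying the same argument after flipping bits, and taking the better of the two gives
\[
  (t_E)_*Q(\{N_1\}) \ge \frac{1}{2\min(N_1,N-N_1)} \ge \frac{1}{N},
\]
which translates to $\UMM/\ELB\le N$.

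The main obstacle is pinning down a witness (or small family of witnesses) whose contribution really reaches $1/(2N_1)$: a naive single\-/sequence witness such as $1^{N_1}0^{N-N_1}$ falls short of this by a factor of $N_0$, so I expect to either aggregate a row of Markov types (e.g.\ those differing only in the count of $0\to 1$ transitions, with $F=L=0$ or $F=L=1$) or to invoke one of the closed forms \eqref{eq:FL=00}--\eqref{eq:FL=11} derived for the algorithm and bound the relevant partial sum directly.
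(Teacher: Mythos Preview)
Your reduction to the identity $\UMM/\ELB = 1/(t_E)_*Q(\{N_1\})$ and your treatment of the degenerate case $N_1\in\{0,N\}$ match the paper's argument exactly; that part is complete.

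There is, however, a genuine gap in your plan for the non\-/degenerate upper bound: the intermediate target $(t_E)_*Q(\{N_1\}) \ge 1/(2N_1)$ is \emph{false}. For $N_1=1$ and $N_0=N-1\ge 2$, summing the closed forms \eqref{eq:FL=00}--\eqref{eq:FL=10} (with \eqref{eq:FL=11} vacuous) gives the exact value
\[
  (t_E)_*Q(\{1\})
  = \frac{1}{2(N-1)} + \frac{1}{(N-1)N} + \frac{1}{2(N-1)}
  = \frac{N+1}{(N-1)N},
\]
which is of order $1/N$, not $1/2$. So no family of Markov types can reach $1/(2N_1)$ here, and the step ``take the better of $1/(2N_1)$ and $1/(2N_0)$ to get $1/(2\min)\ge 1/N$'' cannot be salvaged. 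Your own diagnosis of the single\-/sequence witness was in fact too optimistic: the shortfall from $1/(2N_1)$ is a factor of roughly $N_0(N_1{+}1)$, not merely $N_0$.

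Your second fallback does work, though, provided you aim directly at $1/N$ rather than at $1/(2\min)$. For $2\le N_1 < N_0$ the four closed forms \eqref{eq:FL=00}--\eqref{eq:FL=11} sum to
\[
  (t_E)_*Q(\{N_1\}) = \frac{N+1}{N_0(N_0+1)} \ge \frac{N+1}{N(N+1)} = \frac{1}{N},
\]
and the boundary cases $N_1=1$ (just computed) and $N_0=N_1$ give $\tfrac{N+1}{(N-1)N}$ and $\tfrac{2}{N_0+1}$, both $\ge 1/N$; the regime $N_1>N_0$ then follows from the $0$--$1$ symmetry you already invoked. The paper's own proof takes the witness\-/sequence route you first sketched---it adds the second sequence $0^{N_0}1^{N_1}$ to the one in \eqref{eq:Part-C} and asserts the sum is at least $1/N$---so your instinct that individual sequences are too light and that one should instead bound the partial sums \eqref{eq:FL=00}--\eqref{eq:FL=11} directly is well founded.
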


\begin{proof}
  Unless $N_1=0$,
  we can improve \eqref{eq:Part-C} to
  \begin{equation*}
    (t_*Q)(\{K\})
    \ge
    \frac{1}{N}
  \end{equation*}
  by considering,
  alongside the sequence consisting of $K$ 1s followed by $N-K$ 0s,
  the sequence consisting of $K$ 0s followed by $N-K$ 1s.
  It remains to notice that
  \[
    \UMM
    =
    \frac{\ELB}{(t_*Q)(\{N_1\})}.
    \qedhere
  \]
\end{proof}

\section{Computational experiments}
\label{sec:experiments}

We will conduct two groups of experiments
for the two lower benchmarks and the UMM exchangeability e-variable.
In the first group, the true data distribution
will be a specific Markov probability measure
with initial probability of 1 equal to 1/2.
In this case, we define the \emph{upper benchmark} as
\begin{equation}\label{eq:UB}
  \UB
  :=
  \frac12
  \frac{
    N_{00}!
    N_{01}!
    N_{10}!
    N_{11}!
  }
  {
    (N_{0*}+1)!
    (N_{1*}+1)!
    \pi_0^{N_0}
    \pi_1^{N_1}
  },
\end{equation}
where $\pi_0$ and $\pi_1$ are the stationary probabilities
under the true data-generating distribution.
Therefore, the upper benchmark is an e-variable
w.r.\ to a specific IID probability measure,
and so it is not even an IID e-variable.
Therefore, we should not be surprised if the upper benchmark
exceeds a bona fide exchangeability e-variable;
there are two elements of cheating in interpreting the upper benchmark
as measure of evidence against the null hypothesis of exchangeability:
first, it tests IID rather than exchangeability,
and second, it tests only one individual IID measure.

\begin{figure}[b]
  \begin{center}
    \includegraphics[width=0.48\textwidth]{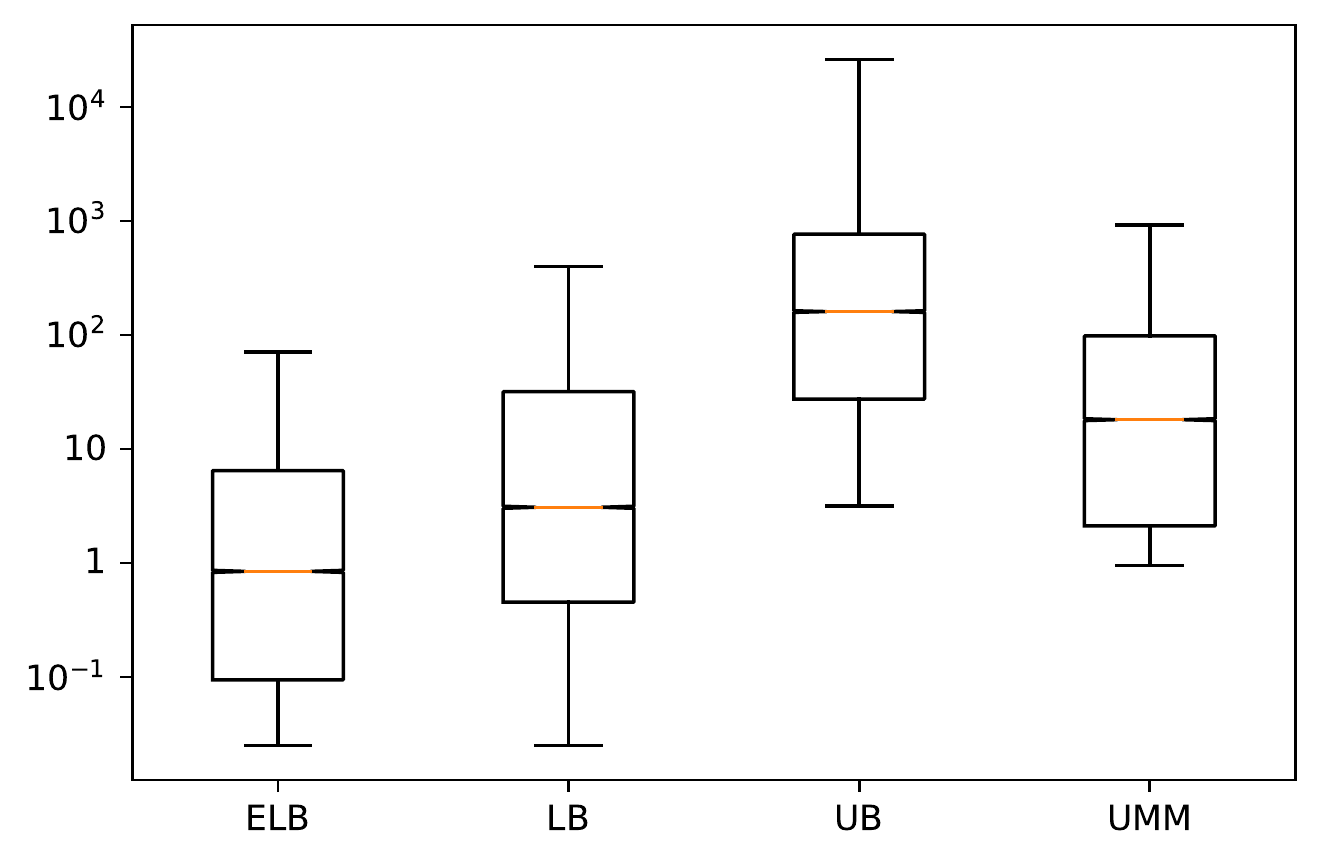}
    \includegraphics[width=0.48\textwidth]{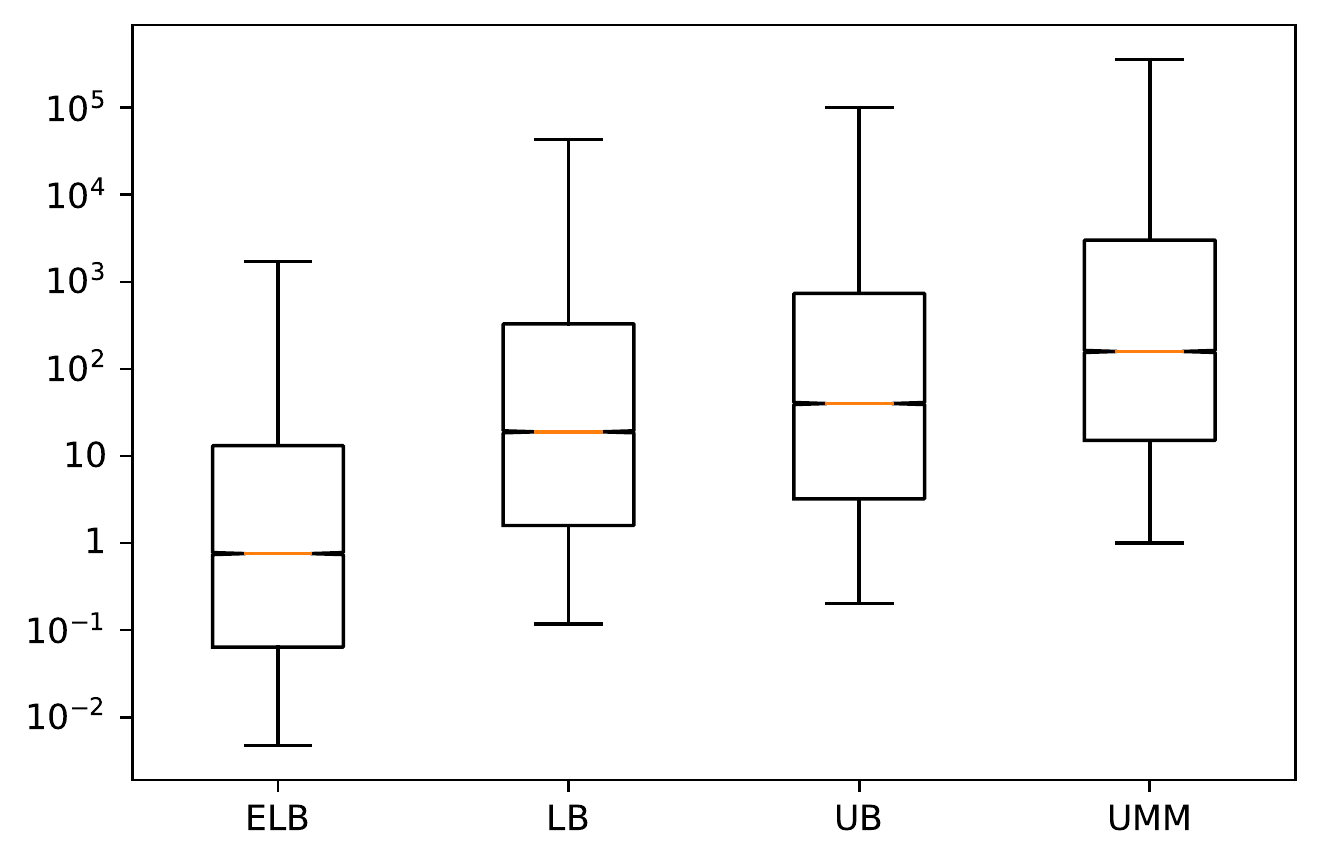}
  \end{center}
  \caption{The four e-values and related quantities, as described in text.
    Left panel: $N=20$ and $\pi_{01}=\pi_{10}=0.1$.
    Right panel: $N=400$ and $\pi_{01}=\pi_{10}=0.4$.}
  \label{fig:specific}
\end{figure}

Our results for specific Markov alternatives are given in Fig.~\ref{fig:specific}.
This figure contains boxplots for $10^5$ simulations of four values:
the exchangeability lower benchmark $\ELB$ (given by \eqref{eq:ELB}),
the lower benchmark $\LB$ (given by \eqref{eq:LB}),
the upper benchmark $\UB$ (given by \eqref{eq:UB}),
and the UMM exchangeability e-variable $\UMM$ (given by Algorithm~\ref{alg:UMM}).
Only two of these, $\ELB$ and $\UMM$, are bona fide exchangeability e-values.
It is interesting that $\UMM$ is often even higher than the upper benchmark,
as in the right panel of Fig.~\ref{fig:specific}.
The horizon $N$ and the transition probabilities
for the two panels are given in the caption.
In both cases, the alternative probability measure is Markov.

\begin{table}
  \begin{center}
    {\footnotesize
    \begin{tabular}{cccccc}
      $N$ & $\pi_{01}=\pi_{10}$ & $\overline{\text{ELB}}$ & $\overline{\text{UMM}}$ & $\overline{\text{UMM}}-\overline{\text{ELB}}$ & upper bound \\
      \hline\\[-2.8mm]
      20 & 0.1 & $-0.116$ & 1.226 & 1.342 & 1.301 \\
      400 & 0.4 & 0.084 & 2.427 & 2.343 & 2.602
    \end{tabular}}
  \end{center}
  \caption{Comparison between the decimal logarithms
    of the exchangeability lower benchmark and the UMM e-variable;
    the upper bound for the difference is $\log_{10}N$, as per \eqref{eq:ELB-bound}.}
  \label{tab:comparison-ergodic}
\end{table}

According to Proposition~\ref{prop:ELB},
the UMM e-value cannot differ from the exchangeability lower benchmark by much.
Table~\ref{tab:comparison-ergodic} gives the means of their decimal logarithms
(over the same $10^5$ simulations as in Fig.~\ref{fig:specific})
and the upper bound \eqref{eq:ELB-bound} for the difference between them.
The bars stand for the empirical averages (over all $10^5$ replications).
The upper bound \eqref{eq:ELB-bound} is violated in the first row
because $\pi_{01}=\pi_{10}$ and $N$ are so small,
which often leads to $N_1\in\{0,N\}$;
of course, the upper bound \eqref{eq:ELB-bound-loose}
(whose value is approximately 1.602 in this case) still holds.

\begin{figure}[b]
  \begin{center}
    \includegraphics[width=0.48\textwidth]{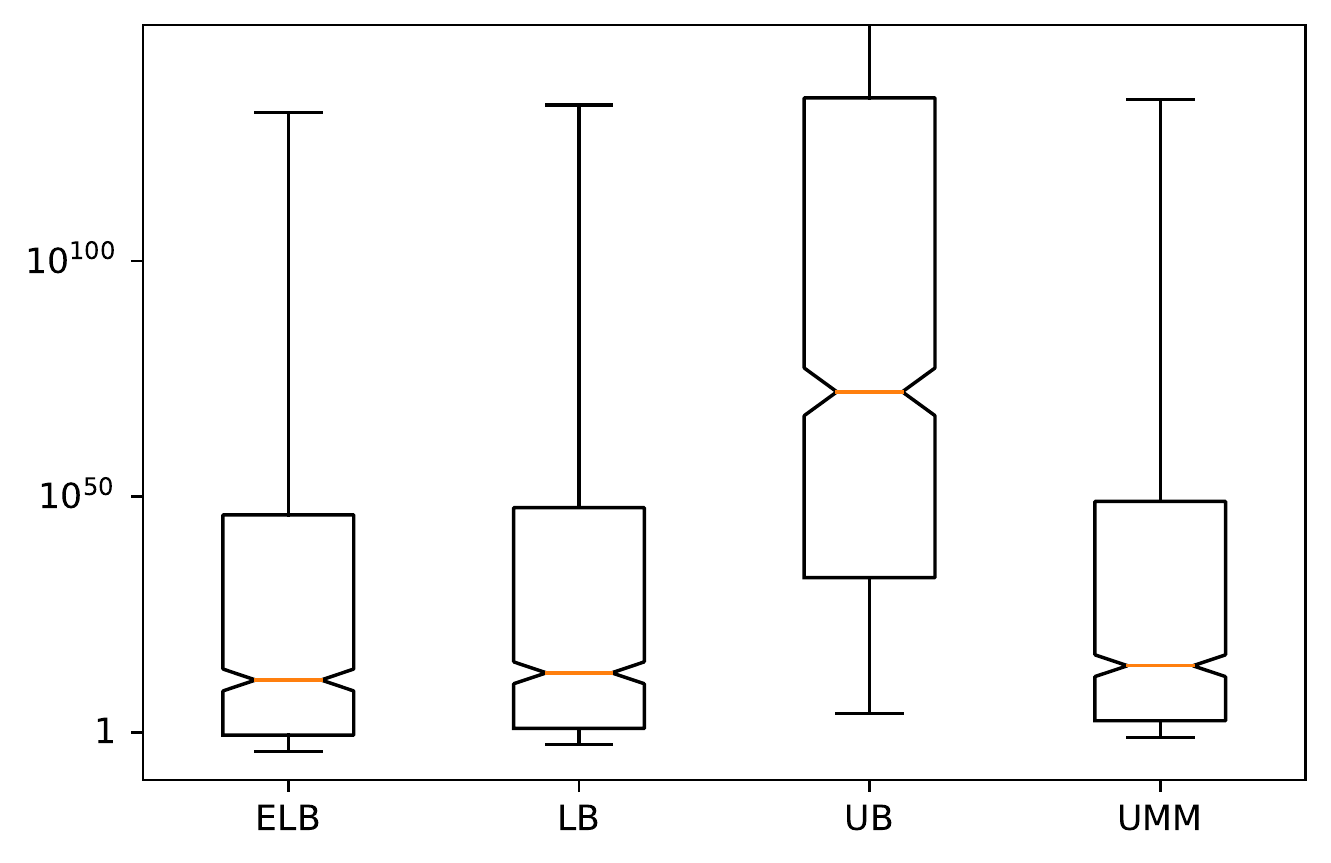}
    \includegraphics[width=0.48\textwidth]{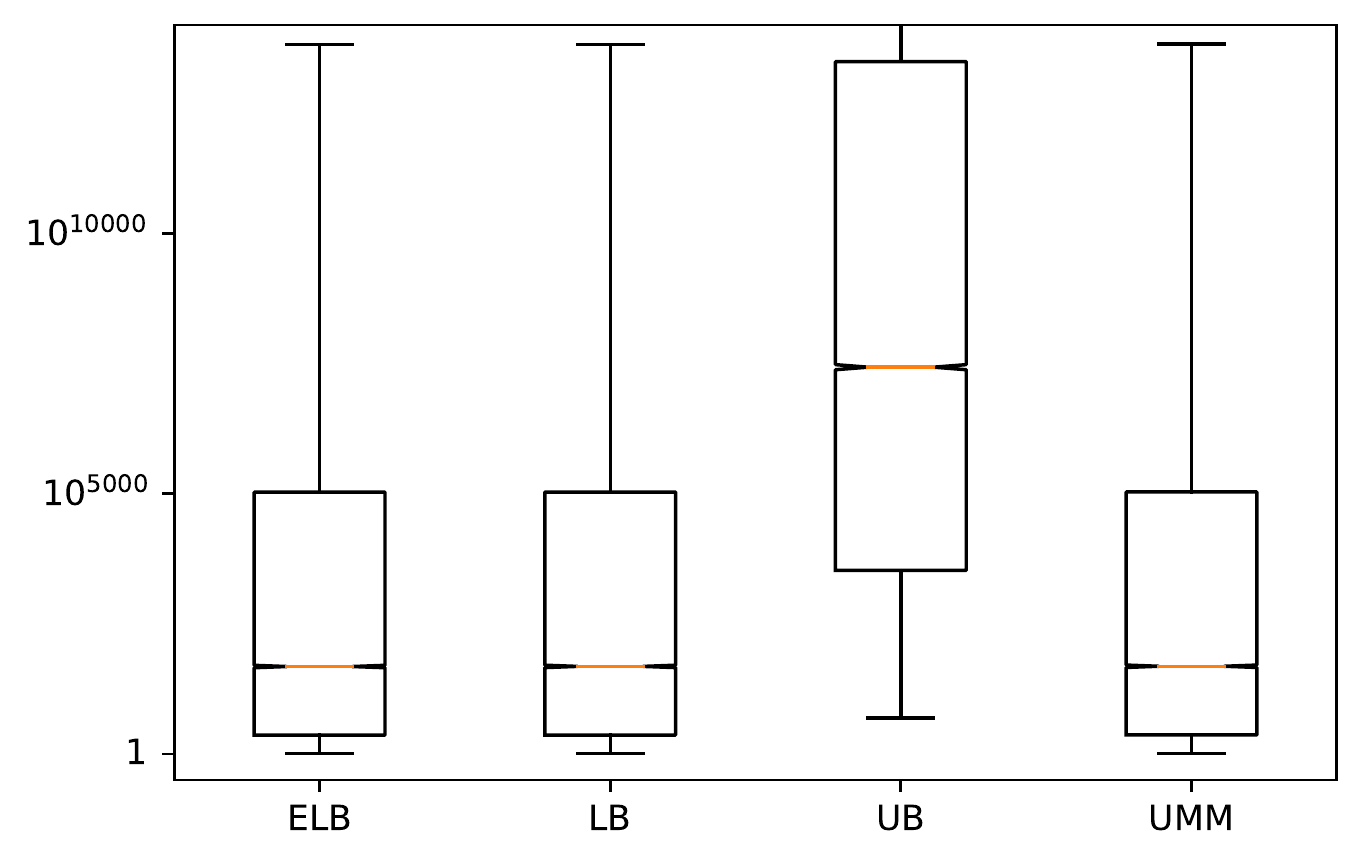}
  \end{center}
  \caption{Two exchangeability e-values and two approximations for the UMM alternative.
    Left panel: $N=K=10^3$.
    Right panel: $N=K=10^5$.}
  \label{fig:mixture}
\end{figure}

The second group of experiments involves generating the binary observations
from the UMM alternative (which is not Markov any more).
The explicit formula for this alternative is given in \eqref{eq:alternative},
but it is easier to generate $\pi_{01}$ and $\pi_{10}$ from the uniform distribution on $[0,1]^2$
and then generate the observations from the Markov chain with these parameters.
Figure~\ref{fig:mixture} shows results for this case,
with the same four values as in Fig.~\ref{fig:specific};
in the expression \eqref{eq:UB} for the upper benchmark,
we now set $\pi_{01}:=\pi_{10}:=1/2$.
It is striking how spread out the distributions for the three benchmarks
and the UMM e-variable are.
They are also skewed, with the mean very different from the median.
Now the lack of validity for the upper benchmark is very obvious:
it takes much larger values, and we will ignore it from now on.

\begin{table}
  \begin{center}
    {\footnotesize
    \begin{tabular}{ccccccc}
      $N$ & $K$ & $\overline{\text{ELB}}$ & $\overline{\text{LB}}$ & $\overline{\text{UMM}}$ & as.\ & UMM quantiles \\
      \hline\\[-2.8mm]
      $10^3$ & $10^3$ & 31.05 & 32.56 & 34.02 & 36.02 & [$-1.01$, 2.56, 14.21, 49.04, 134.22] \\
      $10^4$ & $10^4$ & 356.8 & 358.8 & 360.8 & 360.2 & [0.1, 35.7, 170.0, 509.8, 1378.3] \\
      $10^5$ & $10^5$ & 3570 & 3573 & 3575 & 3602 & [12, 366, 1684, 5033, 13632]
    \end{tabular}}
  \end{center}
  \caption{Some figures for the decimal logarithms of the two lower benchmarks and the UMM e-variable;
    ``as.''\ stands for the asymptotic expression 
    The UMM quantiles are for 5\%, 25\% (first quartile), 50\% (median), 75\% (third quartile), and 95\%.}
  \label{tab:mixture}
\end{table}

Table~\ref{tab:mixture} gives corresponding figures.
Now the bars stand for the empirical averages over $K$ replications
(for three different values of $K$),
$N$ is the horizon,
and ``as.''\ is the common theoretical asymptotic value
for the UMM e-variable and exchangeability lower benchmark obtained from \eqref{eq:combination}
by dividing by $\ln10$ (to convert natural logarithms to decimal ones)
and multiplying by the sample size $N$.

\begin{table}
  \begin{center}
    {\footnotesize
    \begin{tabular}{cccccc}
      $N$ & $K$ & $\overline{\text{ELB}}$ & $\overline{\text{UMM}}$ & $\overline{\text{UMM}}-\overline{\text{ELB}}$ & upper bound \\
      \hline\\[-2.8mm]
      $10^3$ & $10^3$ & 31.05 & 34.02 & 2.968 & 3 \\
      $10^4$ & $10^4$ & 356.8 & 360.8 & 3.964 & 4 \\
      $10^5$ & $10^5$ & 3570 & 3575 & 4.966 & 5
    \end{tabular}}
  \end{center}
  \caption{Analogue of Table~\ref{tab:comparison-ergodic}
    in the situation of Fig.~\ref{fig:mixture}.}
  \label{tab:comparison}
\end{table}

Table~\ref{tab:comparison}, analogous to Table~\ref{tab:comparison-ergodic},
gives the average differences
between the UMM e-variable and exchangeability lower benchmark on the $\log_{10}$ scale,
together with the upper bound given by~\eqref{eq:ELB-bound}.
The upper bound is surprisingly tight.

\section{Conclusion}
\label{sec:conclusion}

In this paper the algorithm for computing the UMM e-variable
was fully developed only in the binary case.
A natural next step would be to extend it to any finite observation space $\mathbf{Z}$.
(A big chunk of Sect.~\ref{sec:algorithm},
following \cite[Sect.~8.6]{Vovk/etal:2005book-local},
presented the combinatorics
for an arbitrary finite observation space $\mathbf{Z}$.)
It is interesting what the computational complexity
of such an extension of Algorithm~\ref{alg:UMM} will be
in general as function of $N$ and $\lvert\mathbf{Z}\rvert$.

The topic of this paper has been testing the exchangeability model in the batch mode
using Markov alternatives.
There are many other interesting null hypotheses among Kolmogorov compression models,
and there are many interesting alternatives.
For example, in \cite[Chap.~9]{Vovk/etal:2022book} we discussed,
alongside Markov alternatives,
detecting changepoints.
Our discussion was in the online mode,
but for changepoint detection the batch mode is not less important
\cite[Remark 8.19]{Vovk/etal:2022book};
e.g., its role has increased in bioinformatics
(including DNA analysis).
Using e-values in changepoint detection is particularly convenient
when multiple hypothesis testing is involved
(as it often is in batch changepoint detection).
Some extensions will be discussed in Appendixes~\ref{app:universal}--\ref{app:changepoint},
including changepoint detection in Appendix~\ref{app:changepoint}.

\appendix
\section{Algorithmic theory of randomness}
\label{app:ATR}

In the main part of the paper we avoided using computability theory
that plays such an important role in Kolmogorov's original approach,
which is the topic of this appendix.
Kolmogorov's complexity models were introduced, in the most complete form,
in what appears to be Kolmogorov's last talk,
given on 14 October 1982 at what later became known as the Kolmogorov seminar;
see \cite[note 12]{Semenov/etal:arXiv2303},
containing Shen's notes taken during the talk,
and \cite[Sect.~4]{Vovk:2001Denmark}.
The Kolmogorov seminar at Moscow State University
was opened by Kolmogorov on 28 October 1981,
and Kolmogorov gave two talks in it,
on 26 November 1981 and 14 October 1982
\cite[note 12]{Semenov/etal:arXiv2303};
the two talks were conflated in my paper \cite[Sect.~4]{Vovk:2001Denmark}.

All results listed in this appendix are either well known
or immediately follow from well-known results.

\subsection*{Mathematical results}

Let $X$ be an infinitely countable set
with a fixed bijection $f:X\to\{1,2,\dots\}$ between $X$ and the natural numbers.
When talking about the computability of functions involving elements $X$,
we mean the computability of those functions with $x\in X$ replaced
by their ``codes'' $f(x)$.
An example (of primary interest to us in this paper)
is the set of all finite binary sequences
with a computable bijection $f$.
Alternatively, $X$ can be any aggregate of constructive objects
in the sense of \cite[Sect.~1.0.6]{Uspensky/Semenov:1993};
in general, we will regard elements of $X$ as constructive objects.

Let us use the notation $C(x)$ for the Kolmogorov complexity of $x$,
$C(x\mid y)$ for the conditional Kolmogorov complexity of $x$ given $y$,
$K(x)$ for the prefix complexity of $x$,
and $K(x\mid y)$ for the conditional prefix complexity of $x$ given $y$.
Here $x$ and $y$ are constructive objects,
such as elements of $X$ or finite subsets of $X$.

Kolmogorov's definition of randomness deficiency
of an element $x$ of a finite set $A\subset X$ is
\begin{equation}\label{eq:Kolmogorov}
  d_A^C(x)
  :=
  \log\lvert A\rvert
  -
  C(x\mid A)
\end{equation}
\cite[Sect.~2.3]{Kolmogorov/Uspensky:1987-Latin},
where $\log$ stands for binary logarithm.
Informally, $x$ is random in $A$ if $d_A^C(x)$ is small.
(And Kolmogorov called $x$ $\Delta$-random in $A$
if $d_A^C(x)\le\Delta$.)

Martin-L\"of \cite{Martin-Lof:1966}
showed that Kolmogorov's definition \eqref{eq:Kolmogorov}
can be stated in terms of p-values.
Let $A$ be a finite subset of $X$
and $U_A$ be the uniform probability measure on $A$.
A function $f:A\to[0,1]$ is a \emph{p-variable} if
\[
  \forall\epsilon>0:
  U_A(f\le\epsilon)
  \le
  \epsilon.
\]
A family $P$ of functions $P_A:A\to[0,1]$,
$A$ ranging over the finite subsets of $X$,
is a \emph{p-test} if
\begin{itemize}
\item
  the function $(A,x)\mapsto P_A(x)$ is upper semicomputable,
  i.e., there is an algorithm that eventually stops
  on input $(A,x,\epsilon)$, where $\epsilon$ is a rational number,
  if and only if $P_A(x)<\epsilon$, and
\item
  for each finite $A\subset X$,
  $P_A$ is a p-variable.
\end{itemize}
The values taken by p-variables are \emph{p-values}.

\begin{lemma}\label{lem:p-universal}
  There exists a \emph{universal} p-test $\tilde P$,
  in the sense that for any p-test $P$ there exists a positive constant $c$
  such that $\tilde P \le c P$.
\end{lemma}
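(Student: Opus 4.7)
The plan is to adapt Martin-L\"of's classical universal-test construction. The first step is to produce an effective enumeration $P^{(1)},P^{(2)},\dots$ of all p-tests (in the sense of the definition just given). Starting from the standard enumeration of Turing machines that attempt to certify upper semi-computable functions $(A,x)\mapsto f_A(x)\in[0,1]$, I would \emph{sanitise} the $i$th machine $M_i$ into a bona fide p-test $P^{(i)}$ by the familiar freezing trick: maintain rational upper bounds, initialised at $1$, and process the certificates output by $M_i$ one at a time, accepting a proposed decrease of the current value on some pair $(A,x)$ only when the resulting tentative function still satisfies $U_A(P^{(i,s)}_A\le r)\le r$ for every rational $r$. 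Because $A$ is finite, this validity check is decidable. The limit $P^{(i)}$ is upper semi-computable by construction and is a p-variable on each finite $A$; moreover, if $M_i$ already certified a genuine p-test, then by monotonicity of the approximations every intermediate tentative function dominates the limit and is therefore already a p-variable, so no decrease is ever frozen and $P^{(i)}$ equals the function certified by $M_i$. Hence $(P^{(i)})_{i\ge 1}$ exhausts all p-tests.

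Next, I would define the universal p-test by the weighted infimum
\[
  \tilde P_A(x)
  :=
  \min\!\left(1,\ \inf_{i\ge 1} 2^i P^{(i)}_A(x)\right).
\]
Upper semi-computability of $\tilde P$ is inherited from the $P^{(i)}$: for a rational $\epsilon\le 1$, the condition $\tilde P_A(x)<\epsilon$ is equivalent to $P^{(i)}_A(x)<2^{-i}\epsilon$ for some $i$, and the latter predicate is recursively enumerable in $(i,A,x,\epsilon)$ by the construction of the previous paragraph. The p-variable property on any finite $A$ follows from the union bound,
\[
  U_A(\tilde P_A\le\epsilon)
  \le
  \sum_{i\ge 1} U_A\!\left(P^{(i)}_A\le 2^{-i}\epsilon\right)
  \le
  \sum_{i\ge 1} 2^{-i}\epsilon
  =
  \epsilon.
\]
Universality is then immediate: any p-test $P$ coincides with $P^{(i_0)}$ for some index $i_0$, and the definition of $\tilde P$ gives $\tilde P\le 2^{i_0}P^{(i_0)} = 2^{i_0}P$, so the constant $c:=2^{i_0}$ works.

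The main obstacle is entirely contained in the first paragraph: one must verify that the freezing procedure produces a family that is uniformly upper semi-computable in $i$ and, crucially, that genuine p-tests end up unchanged in the enumeration rather than being strictly shrunk somewhere. The rest is a direct analogue of the standard weighted-sum construction of a universal Martin-L\"of test, specialised here to the finite-sample-space setting.
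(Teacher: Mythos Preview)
Your argument is correct and is exactly the standard Martin-L\"of construction that the paper invokes; the paper itself gives no proof beyond the pointer to \cite[Theorem~39]{Shen/etal:2017book}. One minor nit: the inclusion $\{\tilde P_A\le\epsilon\}\subseteq\bigcup_i\{P^{(i)}_A\le 2^{-i}\epsilon\}$ implicit in your union bound can fail when the infimum over $i$ is not attained, but the usual $\epsilon+\delta$ trick (or working with strict inequalities throughout) repairs this immediately.
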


The proof of Lemma~\ref{lem:p-universal} is standard
(cf., e.g., \cite[Theorem 39]{Shen/etal:2017book}).
Fix a universal p-test $\tilde P$.
The universal p-test is unique to within a constant factor,
and it is customary in the algorithmic theory of randomness
to disregard such differences,
which we will also do in this appendix.

\begin{remark}
  The usual definitions in the algorithmic theory of randomness
  are given in terms of $-\log P$,
  but for simplicity let us discard the logarithm,
  following \cite{Vovk/Vyugin:1993}.
\end{remark}

Now we can state Martin-L\"of's result
expressing Kolmogorov's deficiency of randomness
via the universal p-test.
In this appendix $\log$ always stands for base 2 logarithm.

\begin{proposition}\label{prop:C}
  There exists a constant $c>0$ such that,
  for all $A$ and $x$,
  \begin{equation}\label{eq:C}
    \left|
      d^C_A(x)
      +
      \log\tilde P_A(x)
    \right|
    \le
    c.
  \end{equation}
\end{proposition}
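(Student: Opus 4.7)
The plan is to establish $d^C_A(x) + \log \tilde P_A(x) \le c_1$ and $d^C_A(x) + \log \tilde P_A(x) \ge -c_2$ separately; the proposition then follows with $c := \max(c_1, c_2)$.

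For the upper bound, I would construct an explicit p-test out of plain Kolmogorov complexity. Setting $P_A(x) := \min\bigl(1,\, 2 \cdot 2^{C(x|A) - \log|A|}\bigr)$, the upper semicomputability of $P$ in $(A, x)$ follows from that of $C(\cdot \mid \cdot)$, and the counting bound $|\{y : C(y|A) \le n\}| \le 2^{n+1}$ immediately gives $U_A(P_A \le \epsilon) \le \epsilon$ for every $\epsilon \in (0, 1]$ (the case $\epsilon > 1$ being trivial). Thus $P$ is a p-test, and Lemma~\ref{lem:p-universal} yields $\tilde P_A \le c' P_A$; taking logarithms gives $\log \tilde P_A(x) + d^C_A(x) \le \log(2c') =: c_1$.

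For the lower bound---equivalent to $C(x|A) \le \log|A| + \log \tilde P_A(x) + O(1)$---I would compress $x$ by exploiting the smallness of $\tilde P_A(x)$. Because $\tilde P$ is upper semicomputable, the sets $S_{A,k} := \{y \in A : \tilde P_A(y) < 2^{-k}\}$ are uniformly r.e.\ in $(A, k)$, and the p-variable property forces $|S_{A,k}| \le 2^{-k}|A|$. With $k(x) := \lfloor -\log \tilde P_A(x) \rfloor$ we have $x \in S_{A, k(x)}$, so $x$ is identified by $A$ together with its position in a prioritised enumeration of the disjoint union $\bigsqcup_k S_{A,k}$ in which pairs with larger $k$ are visited first; this position is bounded by $\sum_{j \ge k(x)} |S_{A,j}| \le 2^{-k(x)+1}|A|$, giving an encoding of length $\log|A| - k(x) + O(1) = \log|A| + \log \tilde P_A(x) + O(1)$.

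The main obstacle is realising the "larger $k$ first" priority constructively, since membership in $S_{A,k}$ is only r.e.\ and tentative list positions may have to be revised as new members are discovered. I would handle this via a standard movable-marker construction that preserves the above position bound up to an $O(1)$ multiplicative factor, absorbed into the additive constant implicit in $C(\cdot \mid A)$; alternatively one can appeal directly to Martin-L\"of's original argument in \cite{Martin-Lof:1966}. Substituting into the encoding length yields $C(x|A) \le \log|A| + \log \tilde P_A(x) + c_2$, i.e.\ $d^C_A(x) + \log \tilde P_A(x) \ge -c_2$, completing the plan.
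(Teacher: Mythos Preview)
Your upper bound is the paper's argument: the counting estimate $|\{y:C(y\mid A)\le n\}|\le 2^{n+1}$ makes (a constant times) $2^{-d^C_A(\cdot)}$ a p-test, and universality of $\tilde P$ then gives $d^C_A(x)+\log\tilde P_A(x)\le c_1$.

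For the lower bound the paper is terser and avoids any priority construction. It simply records that the level sets $\{x\in A:\log|A|+\log\tilde P_A(x)\le n\}$ have at most $2^{n}$ elements (immediate from the p-variable property) and are uniformly enumerable in $(A,n)$; this alone yields $C(x\mid A)\le n+O(1)$ by the standard device of letting the codeword be $x$'s index in an enumeration of the level-$n$ set, padded to length exactly $n$, so that the decoder recovers $n$ from the codeword length and then runs the enumeration. This is Martin-L\"of's original argument, which you cite only as a fallback.

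Your primary mechanism---a single prioritised enumeration of $\bigsqcup_k S_{A,k}$ with ``larger $k$ first'' realised via movable markers---is a detour whose central claim (that the construction ``preserves the above position bound up to an $O(1)$ multiplicative factor'') is neither standard nor obviously true: positions in a movable-marker list are not stable labels a decoder can use, and the natural way to stabilise them (pre-allocate to level $k$ the integers in $[2^{\lceil\log|A|\rceil-k},2^{\lceil\log|A|\rceil-k+1})$, which is legitimate because the bound $|S_{A,k}|\le 2^{-k}|A|$ is known \emph{a priori}) collapses immediately to the length-encodes-level trick above. Promote your Martin-L\"of alternative to the main line and the proof matches the paper's.
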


\begin{proof}
  Martin-L\"of states and proves a slightly less general result
  in \cite[Sect.~II, Theorem on p.~607]{Martin-Lof:1966}
  (see also \cite[Sect.~V, Theorem on p.~616]{Martin-Lof:1966}),
  but his argument is general.
  Since, for each finite set $A\subset X$ and each $n\in\{0,1,\dots\}$,
  we have
  \[
    \left|\left\{
      x\in A
      \mid
      C(x\mid A) \le n
    \right\}\right|
    \le
    2^{n+1},
  \]
  we will also have
  \[
    U_A
    \left(\left\{
      x\in A
      \mid
      \log\lvert A\rvert - C(x\mid A) \ge n
    \right\}\right)
    \le
    2^{-n+2},
  \]
  which implies the part
  \[
    d^C_A(x)
    +
    \log\tilde P_A(x)
    \le
    c
  \]
  of \eqref{eq:C}.

  To prove the other part of \eqref{eq:C},
  i.e.,
  \[
    C(x\mid A)
    \le
    \log\lvert A\rvert
    +
    \log\tilde P_A(x)
    +
    c,
  \]
  it suffices to establish that,
  for some $c$ (perhaps a different one),
  \[
    \forall(A,x):
    \left|\left\{
      x\in A
      \mid
      \log\lvert A\rvert + \log\tilde P_A(x) \le n
    \right\}\right|
    \le
    2^{-n+c}.
  \]
  The last inequality follows immediately from the definition of a p-test
  (with $c:=0$).
\end{proof}

Prefix complexity $K$ has important technical advantages over $C$,
and so a natural modification of \eqref{eq:Kolmogorov} is
\begin{equation}\label{eq:prefix}
  d^K_A(x)
  :=
  \log\lvert A\rvert
  -
  K(x\mid A).
\end{equation}
Analogously to expressing \eqref{eq:Kolmogorov} in terms of p-values,
we can express \eqref{eq:prefix} in terms of e-values.

A function $f:A\to[0,\infty)$ on a finite set $A\subset X$
is an \emph{e-variable} if
\[
  \int f \d U_A
  \le
  1.
\]
A family $E$ of functions $E_A:A\to[0,1]$,
$A$ ranging over the finite subsets of $X$,
is an \emph{e-test} if
\begin{itemize}
\item
  the function $(A,x)\mapsto E_A(x)$ is \emph{lower semicomputable},
  i.e., there is an algorithm that eventually stops
  on input $(A,x,t)$, where $t$ is a rational number,
  if and only if $E_A(x)>t$, and
\item
  for each finite $A\subset X$,
  $E_A$ is an e-variable.
\end{itemize}

\begin{lemma}\label{lem:e-universal}
  There exists a \emph{universal} e-test $\tilde E$,
  in the sense that for any e-test $E$ there exists a positive constant $c$
  such that $\tilde E \ge E/c$.
\end{lemma}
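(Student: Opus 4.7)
The plan is to mirror the standard construction of a universal test used in the proof of Lemma~\ref{lem:p-universal}, replacing the p-variable condition by the e-variable condition and mixing enumerated candidates with a convergent series of computable positive weights.

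First I would effectively enumerate all Turing machines $M_1, M_2, \dots$; each $M_i$ nominally tries to lower-semicompute a function $E^{(i)}:(A,x)\mapsto E^{(i)}_A(x)$ on finite $A\subset X$ by issuing rational lower bounds. Since we cannot decide whether $E^{(i)}$ is a genuine e-test, I would interpose a \emph{taming} step: simulate $M_i$ on all triples $(A,x,t)$ in parallel, and accept a newly proposed lower bound $t$ for $E^{(i)}_A(x)$ only if the updated running lower bounds still satisfy
\[
  \sum_{y\in A}(\text{current accepted lower bound of } E^{(i)}_A(y))
  \;\le\;\lvert A\rvert.
\]
Let $\bar E^{(i)}_A(x)$ be the pointwise supremum of accepted lower bounds. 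Then $\bar E^{(i)}$ is lower semicomputable by construction, it is automatically an e-variable for every finite $A$, and if $M_i$ already lower-semicomputes a bona fide e-test $E$ then no proposed bound is ever rejected, so $\bar E^{(i)}=E$.

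Next I would define
\[
  \tilde E_A(x)
  :=
  \sum_{i=1}^{\infty} 2^{-i}\,\bar E^{(i)}_A(x).
\]
As a countable sum of lower semicomputable functions with computable positive weights, $\tilde E$ is lower semicomputable, and
\[
  \int \tilde E_A \d U_A
  =\sum_{i=1}^{\infty} 2^{-i}\int \bar E^{(i)}_A \d U_A
  \le \sum_{i=1}^{\infty} 2^{-i}
  =1,
\]
so $\tilde E$ is itself an e-test. Given any e-test $E$, pick an index $i$ at which some Turing machine lower-semicomputing $E$ appears in the enumeration; then $\bar E^{(i)}=E$, and hence $\tilde E \ge 2^{-i}E$, yielding the universality property with $c:=2^i$.

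The main obstacle is arranging the taming step so that it simultaneously (a) preserves lower semicomputability of $(A,x)\mapsto\bar E^{(i)}_A(x)$, and (b) never trims a machine that already lower-semicomputes a valid e-test. The delicacy is that rational lower bounds for different points $y\in A$ arrive in arbitrary order, so the admissibility test must be monotone in time and stable under later increases elsewhere in $A$; using only rational thresholds and rejecting any update that would strictly exceed the budget $\lvert A\rvert$ handles this cleanly. Once this is set up, the rest is a direct transcription of the argument for the universal p-test.
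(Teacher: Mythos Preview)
Your proposal is correct and is precisely the standard construction the paper has in mind: the paper's own ``proof'' is just a reference to \cite[Theorem~47]{Shen/etal:2017book}, which builds the universal lower semicomputable semimeasure by effectively enumerating candidates, taming each one so that the defining inequality is never violated, and mixing with computable positive weights---exactly your enumerate/tame/mix scheme, applied directly to e-tests rather than via semimeasures. Your handling of the taming step (accept a rational lower bound only if the running sum over $A$ stays $\le\lvert A\rvert$) is the right monotone budget check, and your observation that a bona fide e-test is never trimmed is correct because its rational lower bounds are strictly below the true values.
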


The proof of Lemma~\ref{lem:e-universal} is again standard
(but \cite[Theorem 47]{Shen/etal:2017book} is now more relevant).
Fix a universal e-test $\tilde E$.
It is clear that the universal e-test is unique to within a constant factor.

Notice the difference between the universal tests
in Lemma~\ref{lem:p-universal} and Lemma~\ref{lem:e-universal}:
whereas in the former ``universal'' means ``smallest'' (to within a constant factor),
in the latter ``universal'' means ``largest''.
The following result expresses the prefix version \eqref{eq:prefix} of deficiency of randomness
via the universal e-test.

\begin{proposition}\label{prop:K}
  There exists a constant $c>0$ such that,
  for all $A$ and $x$,
  \begin{equation}\label{eq:K}
    \left|
      d^K_A(x)
      -
      \log\tilde E_A(x)
    \right|
    \le
    c.
  \end{equation}
\end{proposition}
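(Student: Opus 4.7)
The plan is to mirror the proof of Proposition~\ref{prop:C}, replacing Kolmogorov's combinatorial counting bound by Kraft's inequality and Martin-L\"of's p-test construction by the Levin--Chaitin coding theorem. Both halves of~\eqref{eq:K} come down to the observation that, up to multiplicative constants, the family of lower semicomputable e-tests coincides with the family $A,x \mapsto |A|$ times a lower semicomputable conditional semimeasure in $x$ given $A$.

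For the first half $d^K_A(x) \le \log\tilde E_A(x) + O(1)$, I would consider the family
\[
  E_A(x) := |A|\cdot 2^{-K(x\mid A)}.
\]
Upper semicomputability of $(A,x)\mapsto K(x\mid A)$ makes $E$ lower semicomputable, and Kraft's inequality applied to the prefix code witnessing $K(\cdot\mid A)$ gives $\sum_{x\in A} 2^{-K(x\mid A)}\le 1$, hence $\int E_A\,\dd U_A \le 1$. Thus $E$ is an e-test, and Lemma~\ref{lem:e-universal} furnishes a constant $c_1$ with $E_A\le c_1 \tilde E_A$; taking logarithms produces the desired bound with constant $\log c_1$.

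For the reverse inequality I would introduce $\mu_A(x) := \tilde E_A(x)/|A|$, which is lower semicomputable in $(A,x)$ and satisfies $\sum_{x\in A}\mu_A(x)\le 1$ because $\tilde E_A$ is an e-variable. Regarded as a lower semicomputable semimeasure in $x$ depending on the constructive parameter $A$, the conditional Levin--Chaitin coding theorem yields a constant $c_2$ with $K(x\mid A)\le -\log\mu_A(x) + c_2 = \log|A| - \log\tilde E_A(x) + c_2$, which rearranges to $\log\tilde E_A(x) \le d^K_A(x) + c_2$. Setting $c:=\max(\log c_1,c_2)$ then gives~\eqref{eq:K}.

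The work is genuinely routine; the only place where some care is needed is the conditional form of the coding theorem, which requires the constant to be uniform in the parameter $A$. This is obtained by the standard construction (enumerating, uniformly in a description of $A$, the rational lower approximations to $\mu_A(x)$ for $x\in A$ and converting them into a self-delimiting code) and follows the template of Theorem~47 in~\cite{Shen/etal:2017book}.
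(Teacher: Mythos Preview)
Your proof is correct and rests on the same two ingredients as the paper's: the coding theorem for prefix complexity and the correspondence, via the factor $|A|$, between lower semicomputable conditional semimeasures and e-tests. The difference is purely structural. The paper routes the argument through the universal lower semicomputable semimeasure $\tilde m$ of Lemma~\ref{lem:m-universal}, proving separately that $K(x\mid A)=-\log\tilde m_A(x)+O(1)$ (Proposition~\ref{prop:m-K}, the coding theorem) and that $\tilde E_A(x)$ and $\tilde m_A(x)\,|A|$ agree up to a constant factor (Proposition~\ref{prop:m-E}), and then chains these. You carry out the same computation with $\tilde m$ eliminated: in one direction you compare the e-test $|A|\cdot 2^{-K(\cdot\mid A)}$ directly with $\tilde E_A$, and in the other you apply the coding theorem directly to the semimeasure $\tilde E_A/|A|$. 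The paper's modular route buys the isolation of Proposition~\ref{prop:m-E}, which it singles out as giving the likelihood-ratio interpretation of $\tilde E$; your route is a line or two shorter.
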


Proposition~\ref{prop:K} will follow from two other propositions
(Propositions~\ref{prop:m-K} and~\ref{prop:m-E} below),
which, despite their simplicity (especially Proposition~\ref{prop:m-E}),
are of great independent interest.

A function $f:A\to[0,1]$ on a finite set $A\subset X$
is a \emph{subprobability measure}
(or \emph{semimeasure} \cite[Sect.~4.1]{Shen/etal:2017book})
if
\[
  \sum_{x\in A} f(x)
  \le
  1.
\]
A family $m$ of functions $m_A:A\to[0,1]$,
$A$ ranging over the finite subsets of $X$,
is a \emph{lower semicomputable subprobability measure} if
\begin{itemize}
\item
  the function $(A,x)\mapsto m_A(x)$ is lower semicomputable, and
\item
  for each finite $A\subset X$,
  $m_A$ is a subprobability measure.
\end{itemize}

\begin{lemma}\label{lem:m-universal}
  There exists a \emph{universal} lower semicomputable subprobability measure $\tilde m$,
  in the sense that for any lower semicomputable subprobability measure $m$
  there exists a positive constant $c$ such that $\tilde m \ge m/c$.
\end{lemma}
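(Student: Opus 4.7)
The plan is to use the standard enumeration-and-mixing argument familiar from the construction of a universal lower semicomputable semimeasure on $\{0,1\}^*$; the only adjustment is that here the semimeasures are indexed by finite subsets $A$ of $X$. First, I would effectively enumerate all lower semicomputable functions $f^{(1)},f^{(2)},\dots$ from pairs $(A,x)$, with $A$ a finite subset of $X$ and $x\in A$, into $[0,\infty]$. Such an enumeration exists because lower semicomputability is equivalent to being the pointwise supremum of an effectively presented sequence of rational step-functions, and these programs form a recursively enumerable set.

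Second, from each $f^{(i)}$ I would build an associated lower semicomputable subprobability measure $m^{(i)}$ by dovetailing and clipping on the fly. Running the enumerator for $f^{(i)}$, I maintain, for each $A$ so far encountered and each $x\in A$, a rational lower bound $q_A(x)$ on $m^{(i)}_A(x)$, initially zero. Whenever the enumerator delivers a new lower bound $r>q_A(x)$ on $f^{(i)}_A(x)$, I raise $q_A(x)$ to $\min\bigl(r,\,q_A(x)+\bigl(1-\sum_{y\in A}q_A(y)\bigr)\bigr)$. This yields a nondecreasing effective sequence of rationals, so $m^{(i)}_A(x):=\sup q_A(x)$ is lower semicomputable; by construction $\sum_{y\in A}m^{(i)}_A(y)\le 1$ for each $A$; and if $f^{(i)}$ already satisfied the subprobability condition then no clipping ever occurs, so $m^{(i)}=f^{(i)}$.

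Third, I would set
\[
  \tilde m_A(x) := \sum_{i=1}^\infty 2^{-i}\, m^{(i)}_A(x).
\]
This is lower semicomputable as a uniformly effective weighted sum of lower semicomputable functions with summable weights, and $\sum_{x\in A}\tilde m_A(x)\le\sum_i 2^{-i}\cdot 1=1$ for every finite $A\subset X$, so $\tilde m$ is a lower semicomputable subprobability measure. For any lower semicomputable subprobability measure $m$, one fixes an $i$ with $f^{(i)}=m$ in the enumeration; then $m^{(i)}=m$ by the "no clipping" observation, and hence $\tilde m\ge 2^{-i}m$, which is the required inequality with $c:=2^i$.

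The main thing to verify carefully is that the construction of $m^{(i)}$ from $f^{(i)}$ is uniform in $i$ and preserves lower semicomputability, since the clipping decision at a point $x$ depends on the full vector $\bigl(q_A(y)\bigr)_{y\in A}$ rather than on $q_A(x)$ alone. Once this uniformity is in place, the mixture step and the universality conclusion are routine.
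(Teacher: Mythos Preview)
Your proposal is correct and is precisely the standard enumeration-and-mixing argument that the paper invokes by citing the proof of Theorem~47 in Shen, Uspensky, and Vereshchagin; the only difference is that you spell out the clipping step and the uniformity issue explicitly, whereas the paper simply points to the reference.
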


For a proof of, essentially, Lemma~\ref{lem:m-universal},
see the proof of \cite[Theorem 47]{Shen/etal:2017book}.
Let us abbreviate ``universal lower semicomputable subprobability measure''
to \emph{universal measure}.

\begin{proposition}\label{prop:m-K}
  There exists a constant $c>0$ such that,
  for all $A$ and $x$,
  \begin{equation*}
    \left|
      K(x\mid A)
      +
      \log\tilde m_A(x)
    \right|
    \le
    c.
  \end{equation*}
\end{proposition}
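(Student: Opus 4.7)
The plan is to prove the two halves of the claimed inequality separately; together they constitute the standard coding theorem for $K$ (Levin's theorem), relativized to the parameter $A$. Setting $c_1$ and $c_2$ for the two constants obtained and taking $c := \max(\log c_1, \log c_2)$ gives the uniform bound.

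For the easy direction $K(x \mid A) \ge -\log \tilde{m}_A(x) - \log c_1$, I would set $m^*_A(x) := 2^{-K(x \mid A)}$ and verify that $m^*$ is a lower semicomputable subprobability measure. Lower semicomputability of $(A,x) \mapsto 2^{-K(x\mid A)}$ follows because $K(\cdot \mid \cdot)$ is upper semicomputable (the length of a shortest halting program can be approximated from above by running prefix-free programs in dovetailed fashion relative to $A$). The Kraft inequality applied to the prefix-free decoder that witnesses $K(\cdot \mid A)$ gives $\sum_{x \in A} m^*_A(x) \le \sum_{x} m^*_A(x) \le 1$ for each $A$. Lemma~\ref{lem:m-universal} then yields a constant $c_1$ with $\tilde{m}_A(x) \ge c_1^{-1} \, 2^{-K(x\mid A)}$, which rearranges to the desired inequality.

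For the reverse direction $K(x \mid A) \le -\log \tilde{m}_A(x) + \log c_2$, I would invoke the Kraft--Chaitin theorem (see, e.g., \cite[Sect.~4.4]{Shen/etal:2017book}). From a lower semicomputable enumeration of rational approximations to $\tilde{m}_A(x)$ from below, one extracts a computable list of requests of the form $(A, x, \ell)$, issuing a new request with codeword length $\ell$ each time the current lower approximation of $\tilde{m}_A(x)$ crosses a dyadic threshold $2^{-\ell}$. For each fixed $A$, the total weight $\sum 2^{-\ell}$ of the issued requests is bounded by a constant multiple of $\sum_x \tilde{m}_A(x) \le 1$. Kraft--Chaitin then supplies a prefix-free machine (taking $A$ as an auxiliary input) that decodes each $x$ from some string of length $\lceil -\log \tilde{m}_A(x) \rceil + O(1)$, proving the bound.

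The main obstacle is the Kraft--Chaitin step, for two reasons: first, the construction must be carried out uniformly in the auxiliary parameter $A$, so that a single prefix-free machine serves all finite subsets of $X$ at once; second, one must absorb into the additive constant both the rounding of $-\log \tilde{m}_A(x)$ to an integer codeword length and the possibly infinite family of parameters $A$. Both issues are handled by the standard trick of dividing $\tilde{m}$ by a small constant so that the aggregate of requests over \emph{all} $(A,x)$ stays summable in the required dovetailed sense, after which Kraft--Chaitin applies verbatim to yield the uniform constant $c_2$.
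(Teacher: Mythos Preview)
Your proposal is correct and is exactly the conditional coding theorem that the paper invokes by citing \cite[Sect.~4.5]{Shen/etal:2017book}; the paper's own proof consists of nothing more than that citation, so your sketch is essentially the content of the reference. One small wording point: in the Kraft--Chaitin step the summability constraint is \emph{per $A$} (for each fixed $A$ the issued requests must satisfy $\sum 2^{-\ell}\le 1$), not an aggregate over all pairs $(A,x)$; the uniformity in $A$ comes from the request enumeration being a single effective procedure, not from any global sum.
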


\begin{proof}
  Follow \cite[Sect.~4.5]{Shen/etal:2017book}.
\end{proof}

\begin{proposition}\label{prop:m-E}
  There exists a constant $c>0$ such that,
  for all $A$ and $x$,
  \begin{equation}\label{eq:m-E}
    \frac1c
    \le
    \frac{\tilde m_A(x)\lvert A\rvert}{\tilde E_A(x)}
    \le
    c.
  \end{equation}
\end{proposition}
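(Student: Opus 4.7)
The plan is to exploit the simple bijection between e-variables and subprobability measures on a finite set $A$ given by multiplication/division by $\lvert A\rvert$. Specifically, if $E_A$ is an e-variable on $A$, then
\[
  \sum_{x\in A} \frac{E_A(x)}{\lvert A\rvert}
  =
  \int E_A \d U_A
  \le 1,
\]
so $m_A := E_A/\lvert A\rvert$ is a subprobability measure; conversely, if $m_A$ is a subprobability measure on $A$, then $E_A := \lvert A\rvert \cdot m_A$ is an e-variable. Since $\lvert A\rvert$ is computable from the (constructive) description of $A$, both rescalings preserve lower semicomputability, and hence map e-tests bijectively onto lower semicomputable subprobability measures.

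First I would apply this to $\tilde m$: the family $(A,x) \mapsto \lvert A\rvert \cdot \tilde m_A(x)$ is an e-test by the above. By the universality of $\tilde E$ (Lemma~\ref{lem:e-universal}), there is a constant $c_1>0$ with $\tilde E_A(x) \ge \lvert A\rvert \cdot \tilde m_A(x)/c_1$ for all $A,x$, giving the upper bound in \eqref{eq:m-E}.

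Next I would apply the bijection in the other direction: the family $(A,x) \mapsto \tilde E_A(x)/\lvert A\rvert$ is a lower semicomputable subprobability measure. By the universality of $\tilde m$ (Lemma~\ref{lem:m-universal}), there is a constant $c_2>0$ with $\tilde m_A(x) \ge \tilde E_A(x)/(c_2 \lvert A\rvert)$, giving the lower bound in \eqref{eq:m-E}. Setting $c := \max(c_1,c_2)$ then yields the claim.

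There is essentially no obstacle here: the proposition is a direct consequence of the dual universality of $\tilde m$ and $\tilde E$ combined with the trivial observation that scaling by $\lvert A\rvert$ converts the normalization $\sum m \le 1$ into $\int E \d U_A \le 1$ and back. The only thing to check carefully is that $\lvert A\rvert$ is indeed computable from the description of the finite set $A$ (which is immediate from our convention that elements of $X$ are constructive objects, so finite subsets of $X$ are presented in a form from which their cardinality can be read off), so that lower semicomputability is preserved under both rescalings.
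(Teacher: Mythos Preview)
Your proposal is correct and follows exactly the paper's approach: the paper's proof consists of the single observation that $\tilde m_A(x)\lvert A\rvert$ is an e-test and $\tilde E_A(x)/\lvert A\rvert$ is a lower semicomputable subprobability measure, and you have simply spelled out the details of how universality of $\tilde E$ and $\tilde m$ then yield the two inequalities.
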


\begin{proof}
  It suffices to notice that $\tilde m_A(x)\lvert A\rvert$
  is an e-test
  and that $\tilde E_A(x)/\lvert A\rvert$
  is a lower semicomputable subprobability measure.
\end{proof}

The interpretation of~\eqref{eq:m-E} is that the universal e-test $\tilde E$
is a likelihood ratio:
we divide the universal measure $\tilde m$ (``universal alternative hypothesis'')
by the null uniform probability measure $1/\lvert A\rvert$.

Now we can easily prove Proposition~\ref{prop:K}.

\begin{proof}[Proof of Proposition~\ref{prop:K}]
  Combining the previous propositions,
  we obtain
  \begin{equation}\label{eq:derivation}
  \begin{aligned}
    \left|
      d^K_A(x)
      -
      \log\tilde E_A(x)
    \right|
    &=
    \left|
      \log\lvert A\rvert - K(x\mid A)
      -
      \log\tilde E_A(x)
    \right|\\
    &\le
    \left|
      \log\lvert A\rvert + \log\tilde m_A(x)
      -
      \log(\tilde m_A(x)\lvert A\rvert)
    \right|
    +c
    =
    c,
  \end{aligned}
  \end{equation}
  i.e., \eqref{eq:K}.
  The first equality in \eqref{eq:derivation} just uses the definition of $d^K_A(x)$,
  and the inequality ``$\le$'' is obtained
  by applying Proposition~\ref{prop:m-K} to $K(x\mid A)$
  and applying Proposition~\ref{prop:m-E} to $\tilde E_A(x)$.
\end{proof}

Both complexities $C$ and $K$ and randomness deficiencies $d^C$ and $d^K$
are close to each other.

\begin{proposition}\label{prop:K-vs-C}
  There is a constant $c>0$ such that, for all finite $A\subset X$ and all $x\in A$,
  \begin{equation}\label{eq:K-vs-C}
    C(x\mid A) - c \le K(x\mid A) \le C(x\mid A) + 2 \log C(x\mid A) + c
  \end{equation}
  and
  \begin{equation}\label{eq:dK-vs-dC}
    d^K_A(x) - c \le d^C_A(x) \le d^K_A(x) + 2 \log d^K_A(x) + c.
  \end{equation}
\end{proposition}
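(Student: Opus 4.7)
The plan is to derive \eqref{eq:K-vs-C} from standard Kolmogorov-complexity facts and then obtain \eqref{eq:dK-vs-dC} from the identity $d^C_A(x) - d^K_A(x) = K(x\mid A) - C(x\mid A)$, using a refined encoding to get the $2\log d^K_A(x)$ term (rather than the cruder $2\log\log\lvert A\rvert$ that a naive substitution would yield).

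For \eqref{eq:K-vs-C}, the left inequality $C(x\mid A) \le K(x\mid A) + c$ is immediate, since any prefix-free program can be converted by a fixed universal machine into a plain program of the same length plus $O(1)$. The right inequality $K(x\mid A) \le C(x\mid A) + 2 \log C(x\mid A) + c$ is the textbook bound obtained by taking a shortest plain program $p$ of length $n := C(x\mid A)$ and prepending a self-delimiting encoding of $n$ (e.g.\ write the binary expansion of $n$, double each of its bits, append a marker) of length at most $2\log n + O(1)$, then feeding the resulting string to a suitable universal prefix machine.

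For \eqref{eq:dK-vs-dC}, the lower inequality $d^C_A(x) \ge d^K_A(x) - c$ is immediate from $C \le K + c$. For the upper inequality, I would refine the encoding above: instead of prepending a self-delimiting description of the length $n = C(x\mid A)$, prepend a self-delimiting description of the deficiency $d^C_A(x) = \log\lvert A\rvert - n$. Because $\lvert A\rvert$ is part of the conditioning, the decoder recovers $n$ on its own from $d^C_A(x)$. The self-delimiting encoding of $d^C_A(x)$ costs at most $2\log d^C_A(x) + O(1)$ bits, which yields
\[
  K(x\mid A) \le C(x\mid A) + 2\log d^C_A(x) + c,
\]
equivalently $d^C_A(x) \le d^K_A(x) + 2\log d^C_A(x) + c$.

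What remains is to bootstrap this into $d^C_A(x) \le d^K_A(x) + 2\log d^K_A(x) + c'$, and this is the main obstacle. The idea is a case split: if $d^C_A(x) \le 2\, d^K_A(x)$ then $\log d^C_A(x) \le \log d^K_A(x) + 1$ and we are done with $c' := c+2$; otherwise $d^C_A(x)/2 \le 2\log d^C_A(x) + c$, which forces $d^C_A(x)$ to be bounded by an absolute constant (since $x-4\log x \to \infty$), and for $d^K_A(x)$ above a fixed threshold this constant is absorbed into $c'$. The degenerate regime where $d^K_A(x)$ is itself below that threshold is handled by the same argument (and by the convention that the $c'$ term absorbs any $\log$ of a bounded quantity), completing \eqref{eq:dK-vs-dC}.
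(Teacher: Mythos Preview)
Your argument is correct. The paper's own proof consists solely of two literature pointers (to \cite[Theorem~65]{Shen/etal:2017book} for \eqref{eq:K-vs-C} and to \cite[Proposition~1]{Novikov:arXiv1608} for \eqref{eq:dK-vs-dC}), so your write-up is a self-contained version of what the paper defers to external sources. Your key idea for the upper bound in \eqref{eq:dK-vs-dC} --- prefixing a self-delimiting code for the \emph{deficiency} $\lfloor\log\lvert A\rvert\rfloor - C(x\mid A)$ rather than for the length $C(x\mid A)$, exploiting that $\lvert A\rvert$ is available from the conditioning --- is exactly the standard trick and is what Novikov's proof does; the bootstrap from $2\log d^C_A(x)$ to $2\log d^K_A(x)$ via the inequality $d^C_A(x) - 2\log d^C_A(x) \le d^K_A(x) + c$ is routine (your case split works, though inverting $t\mapsto t-2\log t$ directly is slightly cleaner). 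The only loose end is the regime where $d^K_A(x)$ is small or nonpositive, but that is a convention issue in the statement itself rather than a defect of your proof.
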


\begin{proof}
  See \cite[Theorem 65]{Shen/etal:2017book} for inequalities stronger than \eqref{eq:K-vs-C}.
  For \eqref{eq:dK-vs-dC},
  follow the proof of \cite[Proposition 1]{Novikov:arXiv1608}.
\end{proof}

\subsection*{Discussion}

Kolmogorov's original definition of randomness deficiency
of an element of a finite set is \eqref{eq:Kolmogorov}.
It can be interpreted as the universal p-value
on the logarithmic scale (Proposition~\ref{prop:C}).
A natural modification of Kolmogorov's definition is \eqref{eq:prefix},
given in terms of prefix complexity,
and it can be interpreted as the universal e-value
on the logarithmic scale (Proposition~\ref{prop:K}).

The simplest context in which these definitions can be used
is that of \emph{complexity models},
in the terminology of \cite{Vovk:2001Denmark,Vovk/Shafer:2003}.
A complexity model is a computable partition of the sample space,
and the implicit statement about the observed data sequence $x$
is that it is random in the sense of \eqref{eq:Kolmogorov}
(or \eqref{eq:prefix}, which is close by Proposition~\ref{prop:K-vs-C})
in the block $A\ni x$ of the partition.
Let me give several examples of such models,
those that are most relevant in the context of this paper.
The sample space in all these examples will be $\{0,1\}^*$.
\begin{itemize}
\item
  The main complexity model of interest to Kolmogorov
  \cite{Kolmogorov:1968-Latin,Kolmogorov:1983-Latin}
  was that of \emph{exchangeability},
  where the binary sequences $\{0,1\}^*$
  are divided into the blocks of sequences of the same length
  and with the same number of 1s.
  Stripping this complexity model of the algorithmic theory of randomness,
  we obtain the exchangeability compression model
  introduced in the main part of the paper.
\item
  Another complexity model \cite{Kolmogorov:1983-Latin}
  is the Markov model,
  in which the blocks consist of the binary sequences
  with the identical first element
  and the same number of substrings $00$, $01$, $10$, and $11$.
  In the terminology of \cite[Sect.~11.3.4]{Vovk/etal:2022book},
  the exchangeability model is more specific than the Markov model.
\item
  A further generalization of the exchangeability complexity model
  is the second order Markov model
  (suggested in Kolmogorov's 1982 seminar talk \cite{Vovk:2001Denmark}),
  in which the blocks consist of the binary sequences
  with the identical first and second elements
  and the same number of substrings
  $000$, $001$, $010$, $011$, $100$, $101$, $110$, and $111$.
\item
  A model not considered by Kolmogorov is the \emph{changepoint model}
  (exchangeability with a changepoint),
  in which the blocks are indexed by $(N,\tau,K_0,K_1)$,
  where $N\in\{2,3,\dots\}$ (the horizon),
  $\tau\in\{1,\dots,N-1\}$ (the changepoint),
  $K_0\in\{0,\dots,\tau\}$, and $K_1\in\{0,\dots,N-\tau\}$, 
  and the block $(N,\tau,K_0,K_1)$ consists of all binary sequences of length $N$
  with $K_0$ 1s among their first $\tau$ elements and $K_1$ 1s among their last $N-\tau$ elements.
\end{itemize}

Other complexity models introduced by Kolmogorov were Gaussian and Poisson
(in his 1982 seminar talk \cite[note 12]{Semenov/etal:arXiv2303};
see also \cite{Asarin:1987,Asarin:1988short} and \cite[Sect.~4]{Vovk:2001Denmark}).
A complexity model formalizing the property of being IID rather than exchangeability
was introduced in work \cite{Vovk:1986-arXiv} done under Kolmogorov's supervision.

\subsection*{Stochastic sequences}

Kolmogorov's 1981 seminar talk was devoted
to what he called stochastic sequences,
which can be interpreted as an overarching structure over complexity models.
Let us say that a binary data sequence $x\in X$ is \emph{$(\alpha,\beta)$-stochastic}
if there is a finite set $A\subset X$
such that $C(A)\le\alpha$ and $d_A^C(x)\le\beta$.
And let us say that $x\in X$ is \emph{$\Delta$-random}
w.r.\ to a Kolmogorov complexity model if $d_A^C(x)\le\Delta$,
where $A$ is the block of the model containing $x$.
Data sequences that are modelled using Kolmogorov complexity models
are stochastic; e.g., for some constant $c$:
\begin{itemize}
\item
  if a data sequence of length $N$ is $\Delta$-exchangeable
  (i.e., $\Delta$-random w.r.\ to the exchangeability model),
  it is $(\log N+c,\Delta+c)$-stochastic;
\item
  if a data sequence of length $N$ is $\Delta$-Markov
  (i.e., $\Delta$-random w.r.\ to the Markov model),
  it is $(2\log N+c,\Delta+c)$-stochastic;
\item
  if a data sequence of length $N$ is $\Delta$-Markov of second order,
  it is $(4\log N+c,\Delta+c)$-stochastic;
\item
  if a data sequence of length $N$ is $\Delta$-random
  w.r.\ to the IID model introduced in \cite{Vovk:1986-arXiv},
  it is $(\frac12\log N+c,\Delta+c)$-stochastic;
\item
  if a data sequence of length $N$
  is $\Delta$-exchangeable with one change point
  (i.e., $\Delta$-random w.r.\ to the changepoint model),
  it is $(2\log N+c,\Delta+c)$-stochastic.
\end{itemize}

\section{Quasi-universal e-variables}
\label{app:universal}

In this paper we are interested, at least implicitly,
in the universal e-test $\tilde E$ introduced in Lemma~\ref{lem:e-universal}.
It is a fundamental object in that
its components $\tilde E_A$ are the largest e-variables;
in this sense they are the most powerful e-variables.
By Proposition~\ref{prop:m-E},
$\tilde E_A$ is the likelihood ratio of the universal measure
to the null hypothesis $U_A$.
In the main part of the paper we discussed alternative hypotheses,
and the universal measure can be regarded as the universal alternative.

The way the universal measure is constructed
in the algorithmic theory of randomness
is by averaging over all subprobability measures
that are computable in a generalized sense
(see, e.g., \cite[Theorem~47]{Shen/etal:2017book},
the alternative proof).

The algorithmic theory of randomness, however,
provides only an ideal picture.
It can serve as a model for more practical approaches,
but it is not practical itself.
The two most conspicuous reasons
are that:
\begin{itemize}
\item
  the basic quantities used in the algorithmic theory of randomness,
  such as complexity or randomness deficiency,
  are not computable
  (they are only computable in a generalized sense,
  let alone efficiently computable);
  in particular, the universal alternative is not computable;
\item
  these basic quantities
  are only defined to within a constant
  (additive or multiplicative).
\end{itemize}
What we did in this paper can, however,
be regarded as a computable approximation to the ideal picture.
The idea (which is an old one) is to replace the universal alternative
by a Bayesian average of a statistical model
that is significantly richer than the null hypothesis.
In particular,
the UMM exchangeability e-variable discussed in the main part of this paper
can be regarded as a practical approximation to~$\tilde E$.

The justification that we had for the UMM e-variable is less convincing
than the justification for its ideal counterpart $\tilde E$:
it is the frequentist one given by Lemma~\ref{lem:frequentist}
and assuming that the observed data sequence
is generated by the UMM alternative.
Its advantage, however, is that this justification
does not involve an arbitrary constant factor.

It would be more in the spirit of the algorithmic theory of randomness
to use a different principle for choosing the alternative hypothesis:
instead of choosing an alternative probability measure likely to generate the data,
we could choose an alternative probability measure likely to lead
to a high likelihood ratio of the alternative to the null.

The general scheme of testing exemplified by this paper
is that we test a Kolmogorov compression model as null hypothesis,
and have a batch compression model with a more detailed summarising statistic as alternative.
This paper has the exchangeability model as the null
and a mixture of the first-order Markov model as the alternative.
We can imagine lots of other testing problems of this kind:
\begin{itemize}
\item
  The exchangeability model as the null,
  and the uniform mixture of the second-order Markov model as the alternative.
\item
  The exchangeability model as the null,
  and a mixture of the uniform mixtures of the $k$th order Markov models
  as the alternative;
  the weights $w_k$ for those should sum to 1, $\sum_k w_k = 1$,
  and tend to 0 as slowly as possible as $k\to\infty$.
\item
  The first-order Markov model as the null,
  and the second-order Markov model as the alternative.
\item
  The exchangeability model as the null and the changepoint model as alternative.
\item
  A changepoint at a postulated time $\tau$ as the null,
  and a changepoint at a different time as alternative.
  (In order to obtain confidence regions for the changepoint.)
\end{itemize}
We can call them instances of quasi-universal testing.

In information theory and statistics,
quasi-universal prediction and coding
(similar to quasi-universal testing discussed here) was promoted by Rissanen;
see, e.g., \cite{Rissanen:1983} and Gr\"unwald's review \cite{Grunwald:2007}.
Rissanen's suggestion for the weights $w_k$, $k=1,2,\dots$, that sum to 1
and tend to 0 slowly was
\[
  w_k
  :=
  \frac{1}{c k \log k \log\log k \log\log\log k\dots},
\]
where the denominator includes all terms that exceed 1
and $c\approx0.865$
is the normalizing constant \cite[Appendix~A]{Rissanen:1983}.
The word ``universal'', however, is sometimes used in a more limited sense
in information theory and statistics:
it may be universality, in some sense, for a given statistical model,
without attempting to make the statistical model wider.

Kolmogorov's ideal picture is based on computability,
but when discussing practical approximations
it may be useful to replace computability
by expressibility in a given language.
The idea of using expressibility in logic rather than computability
is much older than the algorithmic theory of randomness
(see, e.g., \cite[Sect.~1]{Martin-Lof:1969})
and goes back to Wald \cite{Wald:1937}.
This idea has led to higher-level algorithmic randomness,
as in \cite{Martin-Lof:1970} and, e.g., \cite{Kjos-Hanssen/etal:2010}.

In this paper we used the uniform prior on the Markov statistical model
to obtain our alternative hypothesis.
Another natural choice is Jeffreys priors \cite{Jeffreys:1961}.
However, in our current context they do not have any obvious advantages.
(Among their advantages in other contexts
are their invariance w.r.\ to smooth reparametrizations
and attaining minimax optimality in some cases \cite[Sect.~8.2]{Grunwald:2007}.)
They do not always exist
and many Bayesian statisticians find them objectionable
(see, e.g., \cite{Vovk/Shafer:2023}).
Using the uniform prior in this paper
leads to simple analytical expressions and efficient calculations.
Similar problems
(using the Markov model as alternative when testing exchangeability)
are considered in \cite{Ramdas/etal:2022} and \cite[Sect.~9.2.7]{Vovk/etal:2022book},
which use priors that are built on top of Jeffreys priors
but are not Jeffreys priors themselves.

The idea of quasi-universal testing is closely related
to Lindley's ``Cromwell's rule'' (see, e.g., \cite[Sect.~6.8]{Lindley:2006}).
A possible interpretation of Cromwell's rule in our context
is that, when designing a suitable e-variable,
we should think of all kinds of alternative models
(say, Markov models of all orders),
and then mix all of them.
Cromwell's rule as stated by Lindley is very general
and encompasses two aspects:
our statistical models should be as wide as possible,
and our priors should be diffuse (at least non-zero).

\section{Changepoint models}
\label{app:changepoint}

In this appendix we will discuss in greater detail
the changepoint compression models mentioned in the previous appendixes.
But first we discuss a changepoint alternative hypothesis
when testing exchangeability.

In the ideal picture,
we just use $\tilde E$ of Lemma~\ref{lem:e-universal} as e-test,
but in practice we could use
\begin{align}
  Q&(\{(z_1,\dots,z_N)\})
  :=
  \frac{1}{N-1}
  \sum_{n=1}^{N-1}
  \int_0^1
  \int_0^1
  \pi_0^{z_1+\dots+z_n}
  (1-\pi_0)^{n-z_1-\dots-z_n}\label{eq:line-1}\\
  &\qquad
  \pi_1^{z_{n+1}+\dots+z_N}
  (1-\pi_1)^{N-n-z_{n+1}-\dots-z_N}
  \d\pi_0
  \d\pi_1\label{eq:line-2}\\
  &=
  \frac{1}{N-1}
  \sum_{n=1}^{N-1}
  \Beta(z_1+\dots+z_n+1,n-z_1-\dots-z_n+1)\notag\\
  &\qquad
  \Beta(z_{n+1}+\dots+z_N+1,N-n-z_{n+1}-\dots-z_N+1)\notag\\
  &=
  \frac{1}{N-1}
  \sum_{n=1}^{N-1}
  \frac{(z_1+\dots+z_n)!(n-z_1-\dots-z_n)!(N-n+1)!}%
    {(z_{n+1}+\dots+z_N)!(N-n-z_{n+1}-\dots-z_N)!(n+1)!}
  \label{eq:CP}
\end{align}
as quasi-universal alternative probability measure.
The expression inside the double integral in \eqref{eq:line-1}--\eqref{eq:line-2}
is the likelihood of the observed data sequence
when the probability of 1
is $\pi_0$ before and including time $n\in\{1,\dots,N\}$
(the changepoint) and
is $\pi_1$ strictly after time $n$.
We average this likelihood over the uniform distribution for $(\pi_0,\pi_1)$
and then over the uniform distribution for the changepoint $n$.

The alternative Markov kernel corresponding to \eqref{eq:CP} is
\[
  Q_{N_1}(\{(z_1,\dots,z_N)\})
  =
  \frac{Q(\{(z_1,\dots,z_N)\})}%
    {\sum_{z'_1,\dots,z'_N:z'_1+\dots+z'_N=N_1}Q(\{(z'_1,\dots,z'_N)\})},
\]
where $N_1:=z_1+\dots+z_N$ is interpreted as the value of the summarising statistic.
Finally, we can compute the quasi-universal e-value as
\[
  E(z_1,\dots,z_N)
  :=
  \binom{N}{N_1}
  Q_{N_1}(\{(z_1,\dots,z_N)\}).
\]
We do not discuss efficient ways of computing this e-value
in this version of the paper.

\subsection*{Confidence regions}

Now suppose we believe that there is at most one changepoint
in a binary data sequence $z_1,\dots,z_N$
and would like to pinpoint its location.
To obtain a confidence region,
we need different null hypotheses.

The Kolmogorov compression model with the changepoint $\tau\in\{1,\dots,N-1\}$
has
\begin{equation}\label{eq:M_tau}
  t_\tau(z_1,\dots,z_N)
  :=
  \left(
    \sum_{n=1}^{\tau}
    z_n,
    \sum_{n={\tau+1}}^N
    z_n
  \right)
\end{equation}
as its summarising statistic.
Examples of probability measures that agree with this KCM are
\begin{multline*}
  P(\{(z_1,\dots,z_N)\})
  :={}\\
  \pi_0^{z_1+\dots+z_{\tau}}
  (1-\pi_0)^{\tau-z_1-\dots-z_{\tau}}
  \pi_1^{z_{\tau+1}+\dots+z_N}
  (1-\pi_1)^{N-\tau-z_{\tau+1}-\dots-z_N}
\end{multline*}
for $\pi_0,\pi_1\in[0,1]$.
Of course, these are not all probability measures that agree with \eqref{eq:M_tau};
those consist of all convex mixtures of the uniform probability measures
on $t_{\tau}^{-1}(k_0,k_1)$,
where $(k_0,k_1)\in\{0,\dots,\tau\}\times\{0,\dots,N-\tau\}$.

As alternative probability measure we can take \eqref{eq:CP} or,
which is slightly more natural,
its modification
\begin{multline*}
  Q_{\tau}(\{(z_1,\dots,z_N)\})
  :=
  \frac{1}{N-2}\\
  \sum_{n\in\{1,\dots,N-1\}\setminus\{\tau\}}
  \frac{(z_1+\dots+z_n)!(n-z_1-\dots-z_n)!(N-n+1)!}%
    {(z_{n+1}+\dots+z_N)!(N-n-z_{n+1}-\dots-z_N)!(n+1)!}
\end{multline*}
that only considers changepoint locations different from $\tau$,
the one we are testing.
The alternative Markov kernel becomes
\begin{multline*}
  Q_{\tau,K_0,K_1}(\{(z_1,\dots,z_N)\})
  ={}\\
  \frac{Q_{\tau}(\{(z_1,\dots,z_N)\})}%
    {\sum_{z'_1,\dots,z'_N:z'_1+\dots+z'_{\tau}=K_0,z'_{\tau+1}+\dots+z'_N=K_1}
    Q_{\tau}(\{(z'_1,\dots,z'_N)\})},
\end{multline*}
where $(K_0,K_1):=(z_1+\dots+z_{\tau},z_{\tau+1}+\dots+z_N)$
is the value of the summarising statistic.
Finally, we can compute the quasi-universal e-value as
\begin{equation}\label{eq:E_tau}
  E_{\tau}(z_1,\dots,z_N)
  :=
  \binom{\tau}{K_0}
  \binom{N-\tau}{K_1}
  Q_{\tau,K_0,K_1}(\{(z_1,\dots,z_N)\}).
\end{equation}

Once we have the e-values \eqref{eq:E_tau},
we have the e-confidence regions for the changepoint $\tau$:
at a significance level $\alpha$,
the e-confidence region is $\{\tau\mid E_{\tau}\le1/\alpha\}$
(see \cite{Vovk/Wang:2023}).
A natural direction of further research
is to find a computationally efficient version of the e-confidence regions
based on \eqref{eq:E_tau}.

\section{Neyman structure}
\label{app:Neyman}

In this appendix we assume, as usual in this paper,
that the sample space is finite.
(In this case every function on the sample space is bounded,
and we do not have to discuss completeness and bounded completeness separately;
in fact, the most relevant notion of completeness
for e-testing without this restriction
would have been ``semi-bounded completeness''
only involving functions that are bounded below.)

Let us say that a statistic (i.e., function on the sample space) $E$
is a \emph{similar} (or \emph{precise}) \emph{e-variable}
for a statistical model $\{P_{\theta}\mid\theta\in\Theta\}$
if $\int E \d P_{\theta} = 1$ for all $\theta\in\Theta$;
this is an analogue for e-testing
of Neyman and Pearson's \cite[Sects~IV(a) and~V(a)]{Neyman/Pearson:1933}
notion of a similar test.
And we say that a statistic $E$
has \emph{Neyman structure} w.r.\ to a sufficient statistic $T$
if $\E_{\theta}(E\mid T)=1$ $P_{\theta}$-a.s.\ for all $\theta\in\Theta$.
This is analogous to the standard notion of Neyman structure
(see, e.g., \cite[Sect.~4.3]{Lehmann/Romano:2022}).

A statistic $T$ is \emph{complete} if,
for any function $f$ on its range,
\[
  \Bigl(
    \text{$\E_{\theta}(f(T))=0$ for all $\theta\in\Theta$}
  \Bigr)
  \Longrightarrow
  \Bigl(
    \text{$f(T)=0$ $P_{\theta}$-a.s.\ for all $\theta\in\Theta$}
  \Bigr).
\]
The following is an analogue of Theorem 4.3.2 in \cite{Lehmann/Romano:2022}.

\begin{proposition}\label{prop:e-Neyman-2}
  Let $T$ be a sufficient statistic
  for a statistical model $\{P_{\theta}\mid\theta\in\Theta\}$.
  If $T$ is complete,
  a statistic is a similar e-variable if and only if it has Neyman structure w.r.\ to $T$.
  The condition that $T$ be complete is both sufficient and necessary.
\end{proposition}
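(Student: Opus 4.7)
The plan is to handle the three claims in turn: the \emph{if} direction (Neyman structure implies similar e-variable), the \emph{only if} direction under completeness (similar e-variable implies Neyman structure), and the necessity of completeness for the equivalence.

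For the \emph{if} direction, suppose $E$ has Neyman structure w.r.\ to $T$, so that $\E_\theta(E \mid T) = 1$ holds $P_\theta$-a.s.\ for every $\theta \in \Theta$. The tower property then gives $\E_\theta(E) = \E_\theta(\E_\theta(E \mid T)) = 1$, so $E$ is a similar e-variable. This step uses neither sufficiency nor completeness.

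For the \emph{only if} direction under completeness, suppose $T$ is complete and $E$ is a similar e-variable. Since $T$ is sufficient, the conditional expectation $\E_\theta(E \mid T)$ does not depend on $\theta$, so we can write it as $g(T)$ for a single function $g$ on the range of $T$. For every $\theta$ we have $\E_\theta(g(T)) = \E_\theta(E) = 1$, i.e., $\E_\theta(g(T) - 1) = 0$. Applying completeness to the function $f := g - 1$, we conclude $g(T) = 1$ $P_\theta$-a.s.\ for all $\theta$, which is precisely Neyman structure.

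For the necessity of completeness, assume $T$ is not complete and build a similar e-variable without Neyman structure. By assumption there is a function $f$ on the range of $T$ with $\E_\theta(f(T)) = 0$ for every $\theta$, yet $f(T) \neq 0$ on a set of positive $P_{\theta_0}$-probability for some $\theta_0$. Because the sample space is finite, $f$ is bounded, so we may choose $\epsilon > 0$ small enough that $E := 1 + \epsilon f(T) \geq 0$ everywhere. Then $\int E \d P_\theta = 1 + \epsilon \E_\theta(f(T)) = 1$ for all $\theta$, so $E$ is a similar e-variable; but $\E_{\theta_0}(E \mid T) = 1 + \epsilon f(T)$ differs from $1$ on a set of positive $P_{\theta_0}$-probability, so $E$ fails Neyman structure. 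The main obstacle is nothing more than arranging $E \geq 0$, and this is made easy by the finiteness of the sample space noted in the parenthetical remark preceding the proposition, which is precisely why bounded completeness and completeness coincide in this setting.
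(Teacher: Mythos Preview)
Your proof is correct and follows essentially the same route as the paper's: tower property for the \emph{if} direction, sufficiency plus completeness applied to $g(T)-1$ for the \emph{only if} direction, and for necessity a witness $f$ to incompleteness is scaled so that $1+\epsilon f(T)\ge 0$ gives a similar e-variable without Neyman structure. The only cosmetic difference is that the paper absorbs your $\epsilon$ into $f$ by directly choosing a $[-1,\infty)$-valued witness.
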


\begin{proof}
  Suppose $T$ is complete.
  It is clear that a statistic that has Neyman structure is a similar e-variable.
  Now suppose $E$ is a similar e-variable.
  Set $f(T):=\E_{\theta}(E\mid T)$;
  $f$ can be chosen independent of $\theta$ since $T$ is sufficient.
  Since $\E_{\theta}(f(T)-1)=0$ for all $\theta$,
  $f(T)=1$ $P_{\theta}$-a.s.\ for all $\theta$,
  and so $E$ has Neyman structure.

  Now suppose that $T$ is not complete.
  Choose a $[-1,\infty)$-valued function $f$
  such that $\E_{\theta}(f(T))=0$ for all $\theta\in\Theta$
  but $f(T)\ne0$ with a positive $P_{\theta}$-probability for some $\theta\in\Theta$.
  Then $1+f(T)$ is a similar e-variable
  that does not have Neyman structure w.r.\ to $T$.
\end{proof}

For our purposes the following one-sided variation of having Neyman structure
is more useful (although it is much less widely applicable).
An \emph{e-variable} w.r.\ to a statistical model $\{P_{\theta}\mid\theta\in\Theta\}$
is a nonnegative random variable $E$
such that $\int E \d P_{\theta} \le 1$ for all $\theta\in\Theta$.
It has \emph{one-sided Neyman structure} w.r.\ to a sufficient statistic $T$
if $\E_{\theta}(E\mid T)\le1$ $P_{\theta}$-a.s.\ for all $\theta\in\Theta$.

Let us say that a statistic $T$ is \emph{supercomplete} if,
for any function $f$ on its range,
\begin{equation}\label{eq:supercomplete}
  \Bigl(
    \text{$\E_{\theta}(f(T))\le0$ for all $\theta\in\Theta$}
  \Bigr)
  \Longrightarrow
  \Bigl(
    \text{$f(T)\le0$ $P_{\theta}$-a.s.\ for all $\theta\in\Theta$}
  \Bigr).
\end{equation}
(It is clear that this property is stronger than completeness.)
Now we have the following analogue of Proposition~\ref{prop:e-Neyman-2}.

\begin{proposition}
  Let $T$ be a sufficient statistic
  for a statistical model $\{P_{\theta}\mid\theta\in\Theta\}$.
  If $T$ is supercomplete,
  a nonnegative random variable is an e-variable if and only if
  it has one-sided Neyman structure w.r.\ to $T$.
  The condition that $T$ be supercomplete is both sufficient and necessary.
\end{proposition}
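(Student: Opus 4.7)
The plan is to mirror the proof of Proposition~\ref{prop:e-Neyman-2} essentially line-by-line, replacing equalities by inequalities throughout and replacing completeness by supercompleteness. First I would dispatch the easy direction: if $E$ has one-sided Neyman structure, then by the tower property $\int E \d P_\theta = \E_\theta(\E_\theta(E\mid T)) \le 1$ for every $\theta$, so $E$ is an e-variable. This direction uses neither sufficiency nor supercompleteness.

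For the nontrivial converse (assuming $T$ is supercomplete), given a nonnegative e-variable $E$ I would set $f(T) := \E_\theta(E\mid T) - 1$, where the representative can be chosen independent of $\theta$ by sufficiency of $T$. Then $\E_\theta(f(T)) = \E_\theta(E) - 1 \le 0$ for every $\theta$, and supercompleteness applied to this $f$ yields $f(T) \le 0$ $P_\theta$-a.s., which is exactly one-sided Neyman structure.

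For the necessity of supercompleteness I would argue by contrapositive: suppose \eqref{eq:supercomplete} fails for some $f$ on the range of $T$, so $\E_\theta(f(T)) \le 0$ for all $\theta$ yet $P_{\theta_0}(f(T) > 0) > 0$ for some $\theta_0 \in \Theta$. A preliminary observation is that $f$ must take some strictly negative value; otherwise $f \ge 0$ combined with $\E_\theta(f(T)) \le 0$ would force $f(T) = 0$ $P_\theta$-a.s.\ for every $\theta$, contradicting the positive-probability clause. Hence $M := -\min f > 0$ is well-defined and finite (the sample space, and therefore the range of $T$, being finite). Setting $E := 1 + f(T)/M$ produces a nonnegative random variable with $\E_\theta(E) = 1 + \E_\theta(f(T))/M \le 1$, so $E$ is a valid e-variable; but $E$ is $T$-measurable, so $\E_{\theta_0}(E\mid T) = E > 1$ on the set $\{f(T) > 0\}$, contradicting one-sided Neyman structure.

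The main obstacle, such as it is, lies in this last part: one must rescale $f$ by $M$ to guarantee a genuinely nonnegative $E$, and must first verify that $\min f < 0$ so that the scaling factor is nonzero. Everything else is essentially symbol-for-symbol analogous to Proposition~\ref{prop:e-Neyman-2}, the only conceptual change being that the two-sided equality $\E_\theta(f(T)) = 0$ there is relaxed to $\E_\theta(f(T)) \le 0$ here, which is precisely what supercompleteness was designed to accommodate.
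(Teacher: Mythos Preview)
Your proof is correct and follows essentially the same route as the paper's. The only difference is in the necessity direction: the paper simply asserts that one may choose a $[-1,\infty)$-valued witness $f$ to the failure of supercompleteness, whereas you supply the missing justification by observing that $\min f<0$ and rescaling by $M=-\min f$; this makes your version slightly more complete.
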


\begin{proof}
  Suppose $T$ is supercomplete.
  It is clear that a nonnegative variable that has one-sided Neyman structure
  is an e-variable.
  Now suppose $E$ is an e-variable.
  Set $f(T):=\E_{\theta}(E\mid T)$.
  Since $\E_{\theta}(f(T)-1)\le0$ for all $\theta$,
  $f(T)\le1$ $P_{\theta}$-a.s.\ for all $\theta$,
  and so $E$ has one-sided Neyman structure.

  Now suppose that $T$ is not supercomplete.
  Choose a $[-1,\infty)$-valued function $f$
  such that $\E_{\theta}(f(T))\le0$ for all $\theta\in\Theta$
  but $f(T)>0$ with a positive $P_{\theta}$-probability for some $\theta\in\Theta$.
  Then $1+f(T)$ is an e-variable
  that does not have Neyman structure w.r.\ to $T$.
\end{proof}

The following two examples show that the notion of supercompleteness
is limited albeit not vacuous.

\begin{example}[exchangeability]
  The summarising statistic $t_E$ of the exchangeability compression model
  (we can set $t_E$ to the number of 1s in the data sequence)
  is supercomplete w.r.\ to the exchangeability statistical model
  (consisting of all exchangeable probability measures).
  This is because for each summary $k$ there exists
  an exchangeable probability measure concentrated on $t_E^{-1}(k)$.
  (And it clear that this argument is applicable to any batch compression model
  and the family of all probability measures that agree with it.)
\end{example}

\begin{example}[IID]
  On the other hand,
  $t_E$ is not supercomplete w.r.\ to the Bernoulli statistical model
  $(B_{\theta}\mid\theta\in(0,1))$
  (where $B_{\theta}$ is the probability measure on $\{0,1\}$
  satisfying $B_{\theta}(\{1\})=\theta$).
  The standard argument for completeness
  as given in \cite[Example 4.3.1]{Lehmann/Romano:2022}
  now fails.
  A function $f$ satisfying the first inequality in \eqref{eq:supercomplete}
  can be written as
  \begin{equation}\label{eq:counterexample}
    \sum_{k=0}^N
    f(k)
    \binom{N}{k}
    \rho^k
    \le
    0,
    \quad
    \text{for all $\rho\in(0,\infty)$},
  \end{equation}
  and under the supercompleteness we would have concluded that $f\le0$.
  But on the left-hand side of \eqref{eq:counterexample}
  we can have any polynomial of degree $N$,
  and a polynomial can be nonpositive without all its coefficients being nonpositive.
  An example is $-(\rho-1)^2$,
  which corresponds to the function
  \[
    f(k)
    :=
    \begin{cases}
      -1 & \text{if $k=0$}\\
      \frac{2}{N} & \text{if $k=1$}\\
      -\frac{2}{N(N-1)} & \text{if $k=2$}\\
      0 & \text{otherwise}.
    \end{cases}
  \]
\end{example}

\begin{thebibliography}{10}
\bibitem{Asarin:1987}
Eugene~A. Asarin.
\newblock Some properties of {Kolmogorov} {$\Delta$-random} finite sequences.
\newblock {\em Theory of Probability and its Applications}, 32:507--508, 1987.

\bibitem{Asarin:1988short}
Eugene~A. Asarin.
\newblock On some properties of finite objects random in the algorithmic sense.
\newblock {\em Soviet Mathematics Doklady}, 36:109--112, 1988.

\bibitem{Grunwald/etal:arXiv1906}
Peter Gr\"unwald, Rianne de~Heide, and Wouter~M. Koolen.
\newblock Safe testing.
\newblock Technical Report
  \href{https://arxiv.org/abs/1906.07801}{arXiv:1906.07801 [math.ST]},
  \href{https://arxiv.org/}{arXiv.org} e-Print archive, June 2020.

\bibitem{Grunwald:2007}
Peter~D. Gr\"unwald.
\newblock {\em The Minimum Description Length Principle}.
\newblock MIT Press, Cambridge, MA, 2007.

\bibitem{Jeffreys:1961}
Harold Jeffreys.
\newblock {\em Theory of Probability}.
\newblock Oxford University Press, Oxford, third edition, 1961.

\bibitem{Kelly:1956}
John~L. Kelly.
\newblock A new interpretation of information rate.
\newblock {\em Bell System Technical Journal}, 35:917--926, 1956.

\bibitem{Kjos-Hanssen/etal:2010}
Bj{\o}rn Kjos-Hanssen, Andr\'e Nies, Frank Stephan, and Liang Yue.
\newblock Higher {Kurtz} randomness.
\newblock {\em Annals of Pure and Applied Logic}, 161:1280--1290, 2010.

\bibitem{Kolmogorov:1968-Latin}
Andrei~N. Kolmogorov.
\newblock Logical basis for information theory and probability theory.
\newblock {\em IEEE Transactions on Information Theory}, IT-14:662--664, 1968.

\bibitem{Kolmogorov:1983-Latin}
Andrei~N. Kolmogorov.
\newblock Combinatorial foundations of information theory and the calculus of
  probabilities.
\newblock {\em Russian Mathematical Surveys}, 38:29--40, 1983.

\bibitem{Kolmogorov/Uspensky:1987-Latin}
Andrei~N. Kolmogorov and Vladimir~A. Uspensky.
\newblock Algorithms and randomness.
\newblock {\em Theory of Probability and Its Applications}, 32:389--412, 1987.

\bibitem{Lauritzen:1988}
Steffen~L. Lauritzen.
\newblock {\em Extremal Families and Systems of Sufficient Statistics}.
\newblock Springer, New York, 1988.

\bibitem{Lehmann:2006}
Erich~L. Lehmann.
\newblock {\em Nonparametrics: Statistical Methods Based on Ranks}.
\newblock Springer, New York, revised first edition, 2006.

\bibitem{Lehmann/Romano:2022}
Erich~L. Lehmann and Joseph~P. Romano.
\newblock {\em Testing Statistical Hypotheses}.
\newblock Springer, Cham, fourth edition, 2022.

\bibitem{Lindley:2006}
Dennis~V. Lindley.
\newblock {\em Understanding Uncertainty}.
\newblock Wiley, Hoboken, NJ, 2006.

\bibitem{Martin-Lof:1966}
Per Martin-L\"of.
\newblock The definition of random sequences.
\newblock {\em Information and Control}, 9:602--619, 1966.

\bibitem{Martin-Lof:1969}
Per Martin-L\"of.
\newblock The literature on von {Mises'} {Kollektivs} revisited.
\newblock {\em Theoria}, 35:12--37, 1969.

\bibitem{Martin-Lof:1970}
Per Martin-L\"of.
\newblock On the notion of randomness.
\newblock In Akiko Kino, John Myhill, and Richard~E. Vesley, editors, {\em
  Intuitionism and Proof Theory. Proceedings of the Summer Conference at
  Buffalo NY 1968}, pages 73--78. North-Holland, Amsterdam, 1970.

\bibitem{Neyman:1977}
Jerzy Neyman.
\newblock Frequentist probability and frequentist statistics.
\newblock {\em Synthese}, 36:97--131, 1977.

\bibitem{Neyman/Pearson:1933}
Jerzy Neyman and Egon~S. Pearson.
\newblock On the problem of the most efficient tests of statistical hypotheses.
\newblock {\em Philosophical Transactions of the Royal Society of London A},
  231:289--337, 1933.

\bibitem{Novikov:arXiv1608}
Gleb Novikov.
\newblock Relations between randomness deficiencies.
\newblock Technical Report
  \href{https://arxiv.org/abs/1608.08246}{arXiv:1608.08246 [math.LO]},
  \href{https://arxiv.org/}{arXiv.org} e-Print archive, August 2016.
\newblock Published in \emph{Lecture Notes in Computer Science} 10307:338--350
  (2017).

\bibitem{Ramdas/etal:2022}
Aaditya Ramdas, Johannes Ruf, Martin Larsson, and Wouter Koolen.
\newblock Testing exchangeability: {Fork}-convexity, supermartingales and
  e-processes.
\newblock {\em International Journal of Approximate Reasoning}, 141:83--109,
  2022.

\bibitem{Rissanen:1983}
Jorma Rissanen.
\newblock A universal prior for integers and estimation by minimum description
  length.
\newblock {\em Annals of Statistics}, 11:416--431, 1983.

\bibitem{Semenov/etal:arXiv2303}
Alexey Semenov, Alexander Shen, and Nikolay Vereshchagin.
\newblock Kolmogorov's last discovery? ({Kolmogorov} and algorithmic
  statistics).
\newblock Technical Report
  \href{https://arxiv.org/abs/2303.13185}{arXiv:2303.13185 [math.LO]},
  \href{https://arxiv.org/}{arXiv.org} e-Print archive, March 2023.

\bibitem{Shafer/Vovk:2019}
Glenn Shafer and Vladimir Vovk.
\newblock {\em Game-The\-o\-ret\-ic Foundations for Probability and Finance}.
\newblock Wiley, Hoboken, NJ, 2019.

\bibitem{Shen/etal:2017book}
Alexander Shen, Vladimir~A. Uspensky, and Nikolai Vereshchagin.
\newblock {\em Kolmogorov Complexity and Algorithmic Randomness}.
\newblock American Mathematical Society, Providence, RI, 2017.

\bibitem{Tutte:1984}
W.~T. Tutte.
\newblock {\em Graph Theory}.
\newblock Addison-Wesley, Reading, MA, 1984.

\bibitem{Uspensky/Semenov:1993}
Vladimir~A. Uspensky and Alexei~L. Semenov.
\newblock {\em Algorithms: Main Ideas and Applications}.
\newblock Kluwer, Dordrecht, 1993.

\bibitem{Vovk:1986-arXiv}
Vladimir Vovk.
\newblock On the concept of the {Bernoulli} property.
\newblock {\em Russian Mathematical Surveys}, 41:247--248, 1986.
\newblock Another English translation with proofs: \cite{Vovk:arXiv1612}.

\bibitem{Vovk:2001Denmark}
Vladimir Vovk.
\newblock Kolmogorov's complexity conception of probability.
\newblock In Vincent~F. Hendricks, Stig~Andur Pedersen, and Klaus~Frovin
  J{\o}rgensen, editors, {\em Probability Theory: Philosophy, Recent History
  and Relations to Science}, pages 51--69. Kluwer, Dordrecht, 2001.

\bibitem{Vovk:arXiv1612}
Vladimir Vovk.
\newblock On the concept of {Bernoulliness}.
\newblock Technical Report
  \href{https://arXiv.org/abs/1612.08859}{arXiv:\allowbreak 1612.\allowbreak
  08859} [math.ST], \href{https://arXiv.org}{arXiv.org} e-Print archive,
  December 2016.

\bibitem{Vovk/etal:2005book-local}
Vladimir Vovk, Alex Gammerman, and Glenn Shafer.
\newblock {\em Algorithmic Learning in a Random World}.
\newblock Springer, New York, first edition, 2005.
\newblock Section 8.6 of the first edition is not part of the second edition
  \cite{Vovk/etal:2022book}.

\bibitem{Vovk/etal:2022book}
Vladimir Vovk, Alex Gammerman, and Glenn Shafer.
\newblock {\em Algorithmic Learning in a Random World}.
\newblock Springer, Cham, second edition, 2022.

\bibitem{Vovk/Shafer:2003}
Vladimir Vovk and Glenn Shafer.
\newblock Kolmogorov's contributions to the foundations of probability.
\newblock {\em Problems of Information Transmission}, 39:21--31, 2003.

\bibitem{Vovk/Shafer:2023}
Vladimir Vovk and Glenn Shafer.
\newblock A conversation with {A.}\ {Philip} {Dawid}.
\newblock {\em Statistical Science}, 2023.
\newblock Submitted.

\bibitem{Vovk/Vyugin:1993}
Vladimir Vovk and Vladimir~V. V'yugin.
\newblock On the empirical validity of the {Bayesian} method.
\newblock {\em Journal of the Royal Statistical Society B}, 55:253--266, 1993.

\bibitem{Vovk/Wang:2021}
Vladimir Vovk and Ruodu Wang.
\newblock E-values: Calibration, combination, and applications.
\newblock {\em Annals of Statistics}, 49:1736--1754, 2021.

\bibitem{Vovk/Wang:2023}
Vladimir Vovk and Ruodu Wang.
\newblock Confidence and discoveries with e-values.
\newblock {\em Statistical Science}, 2023.
\newblock To appear.

\bibitem{Vyugin:arXiv1907}
Vladimir~V. V'yugin.
\newblock Kolmogorov complexity in the {USSR} (1975--1982): Isolation and its
  end.
\newblock Technical Report
  \href{https://arxiv.org/abs/1907.05056}{arXiv:1907.05056 [cs.GL]},
  \href{https://arxiv.org/}{arXiv.org} e-Print archive, July 2019.

\bibitem{Wald:1937}
Abraham Wald.
\newblock Die {Widerspruchfreiheit} des {Kollectivbegriffes} der
  {Wahr\-schein\-lich\-keits\-rechnung}.
\newblock {\em Ergebnisse eines Mathematischen Kolloquiums}, 8:38--72, 1937.
\end{thebibliography}
\end{document}